\pgfplotsset{compat=1.18}
\newcommand{\Eqref}[1]{Equation~(\ref{#1})}
\newtheorem{theorem}{Theorem}
\newtheorem{corollary}{Corollary}
\newtheorem{lemma}{Lemma}
\newtheorem{conjecture}{Conjecture}
\newtheorem{remark}{Remark}
\newtheorem{proposition}{Proposition}
\newcommand{\keywords}[1]{\vspace{0.5em}\noindent\textbf{Keywords:} \textit{#1}}
\title{Engineering Social Optimality via Utility Shaping in Non-Cooperative Games under Incomplete Information and Imperfect Monitoring}
\author{David Smith\thanks{CSIRO's Data61, Sydney, Australia, David.Smith@data61.csiro.au}\and Jie Dong\thanks{Obsidian Security, Sydney, Australia, jdong@obsidiansecurity.com}\and Yizhou Yang\thanks{Zhongguancun Laboratory, Beijing, China, yangyz@zgclab.edu.cn}}
\date{\today}
\begin{document}
\maketitle
\vspace{-1.2cm}
\onehalfspacing
\begin{abstract}
In this paper, we study decentralized decision-making where agents optimize private objectives under incomplete information and imperfect public monitoring, in a non-cooperative setting. By shaping utilities—embedding shadow prices or Karush-Kuhn-Tucker(KKT)-aligned penalties—we make the stage game an exact-potential game whose unique equilibrium equals the (possibly constrained) social optimum. We characterize the Bayesian equilibrium as a stochastic variational inequality; strong monotonicity follows from a single-inflection compressed/stretched-exponential response combined with convex pricing. We give tracking bounds for damped-gradient and best-response-with-hysteresis updates under a noisy public index, and corresponding steady-state error. The framework accommodates discrete and continuous action sets and composes with slower discrete assignment. Deployable rules include: embed prices/penalties; publish a single public index; tune steps, damping, and dual rates for contraction. Computational experiments cover (i) a multi-tier supply chain and (ii) a non-cooperative agentic-AI compute market of bidding bots. Relative to price-only baselines, utility shaping attains near-centralized welfare, eliminates steady-state constraint/capacity violations when feasible, and accelerates convergence; with quantization, discrete equilibria track continuous ones within the mesh. The blueprint is portable to demand response, cloud/edge scheduling, and transportation pricing and biosecurity/agriculture. Overall, utility shaping plus a public index implements the constrained social optimum with stable equilibria under noise and drift—an operations-research-friendly alternative to heavy messaging or full mechanism design.
\end{abstract}

\keywords{Decentralized control, potential games, variational inequalities, pricing, supply chains, agentic AI}

\section{Introduction}
Many operations research (OR) systems feature strategic agents with private information, noisy feedback, and tight constraints: e.g., demand response in power systems, cloud/edge resource allocation, transportation networks, agentic AI and multi-tier supply chains \cite{PalenskyDietrich2011,AlbadiElSaadany2008,SimchiLevi2007,Beckmann1956}. Rich message exchange is often costly or privacy-sensitive, yet even coarse coordination can produce large gains (cost, reliability/SLA, and compliance) \cite{Cachon2003,McMahan2017}. Distributed optimization and control provide scalable coordination primitives under such constraints \cite{NedicOzdaglar2010,BoydADMM2011}, and game-theoretic control connects these primitives to equilibrium notions with robustness guarantees \cite{BasarOlsder1999}. Our principal question is: \emph{can we achieve social optimality with a unique, globally stable equilibrium using only local measurements and a low-bandwidth public index?}

Our answer is a \emph{message-free utility-shaping} blueprint. We embed shadow prices or KKT-aligned penalties into private payoffs so that selfish first-order conditions replicate those of the planner. Under mild curvature—captured by compressed/stretched exponential response surfaces that arise in practice—the induced game is an exact-potential game with a strongly monotone pseudo-gradient. The Bayesian equilibrium then solves a stochastic variational inequality (SVI), enabling clean existence/uniqueness and algorithmic tracking guarantees under noise/drift. In contrast to mechanism design with transfers or heavy signaling, our implementation requires only a \emph{single public index} (scarcity/damage/reliability) and local measurements.

Our contributions are as following:
\begin{enumerate}[label=\textit{\roman*}]
\item \textbf{Planner program $\leftrightarrow$ shaped game.}
      We formalize a welfare program with reliability/throughput proxies and show that the shaped non-cooperative stage game is an \emph{exact potential} game whose (pure) Nash equilibirum NE \emph{implements the planner’s social optimum}, including constraints via KKT-aligned prices/penalties (Eq.~(4) Dong-Smith-Hanlen (DSH)~\cite{DongSmithHanlen2016}; Eq.~(5) Yang–Smith~\cite{YangSmith2017ICC}).

\item \textbf{Curvature certificate and uniqueness}
      Using single-inflection compressed/stretched-exponential response (IEEE~802.15.8 archetype) with convex pricing, we obtain strong monotonicity/strong concavity on compact domains, yielding \emph{existence and uniqueness} of equilibrium and enabling contraction.

\item \textbf{Bayesian equilibrium as an SVI}
      Under incomplete information and a public index, equilibrium solves a \emph{stochastic variational inequality}; Lipschitz/curvature bounds deliver a \emph{unique} Bayesian NE and connect directly to first-order algorithms.

\item \textbf{Dynamics, tracking, and discrete robustness}
      We derive explicit tracking bounds that grow linearly with drift and noise and tighten as the update rule becomes more contractive (i.e., the one-step shrinkage is stronger), and we show the same guarantees hold when discrete actions are composed with continuous control.

\item \textbf{Two domains, consistent gains}
      Computational studies in (i) multi-tier supply chains and (ii) a non-cooperative agentic-AI compute market show near-centralized welfare, markedly fewer violations (vanishing under feasible capacity with dual damping), and faster convergence than price-only baselines.

\item \textbf{Deployable playbook \& limits}
      We give a one-screen recipe: choose an interpretable public index; fit the two parameters that shape the sigmoid curve relating “effort” to reliability/throughput; and tune (a) the agents’ update step and damping, (b) the index’s adjustment speed, and (c) a small hysteresis band so the overall loop is contractive. We also specify scope/limitations (flat or multi-inflection curves, strong complementarities, delayed/noisy indices) and practical guardrails (more damping/smoothing, event triggers, discrete-convex composition) with brief managerial implications.

\end{enumerate}
For summarized benefits of our contribution, please see Table 1 here

\begin{table}[h]
\centering
\caption{Baseline vs. shaped-utility outcomes (qualitative summary).}
\label{tab:baseline-vs-shaped}
\begin{tabular}{lcc}
\toprule
Outcome & Price-only baseline & Utility shaping (+ public index) \\
\midrule
Welfare vs planner optimum & lower / variable & near-optimal (tight) \\
Constraint violations (steady-state) & possible/persistent & eliminated when feasible \\
Convergence speed & slower / oscillatory & accelerated, contractive \\
Robustness to noise/drift & limited & explicit tracking bounds \\
Discrete + continuous composition & ad hoc & well-posed under shared potential \\
Messaging overhead & low & unchanged (no runtime messaging) \\
\bottomrule
\end{tabular}
\end{table}
\begin{flushleft} Please find here then, Fig. \ref{fig:public-index-loop} explaining the use of message blueprint as message-free loop\end{flushleft}
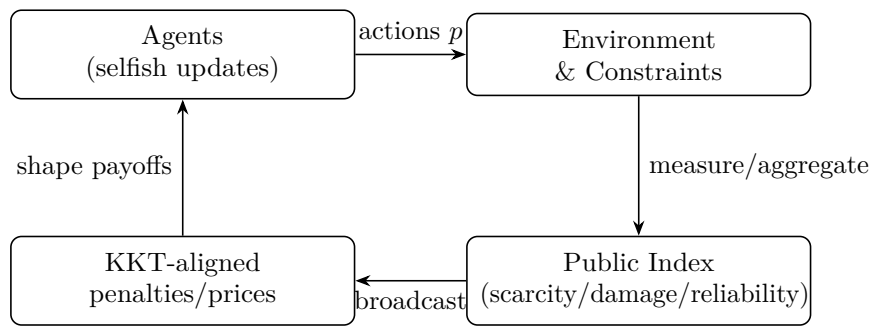
\begin{figure}[h]
\centering
\begin{tikzpicture}[>=Stealth, line width=0.6pt, font=\small,
  box/.style={draw, rounded corners, align=center, text width=4.1cm, minimum height=1.05cm, inner sep=6pt}]
  \node[box] (agents) at (0,0) {Agents\\(selfish updates)};
  \node[box] (env)    at (6.0,0) {Environment\\\& Constraints};
  \node[box] (index)  at (6.0,-3.0) {Public Index\\(scarcity/damage/reliability)};
  \node[box] (pen)    at (0,-3.0) {KKT-aligned\\penalties/prices};

  \draw[->] (agents) -- node[midway, above] {actions $p$} (env);
  \draw[->] (env)    -- node[midway, right] {measure/aggregate} (index);
  \draw[->] (index)  -- node[midway, below] {broadcast} (pen);
  \draw[->] (pen)    -- node[midway, left] {shape payoffs} (agents);
\end{tikzpicture}
\caption{Message-free loop: a single public index focalizes decentralized responses; private payoffs are shaped with KKT-aligned penalties so selfish updates implement the planner’s first-order conditions.}
\label{fig:public-index-loop}
\end{figure}

\paragraph{Organization.}
This paper is organized as follows. Section~\ref{sec:literature} reviews related work. Section~\ref{sec:problem} formulates the planner’s welfare program and the shaped non-cooperative stage game. Section~\ref{sec:methodology} develops the methodology for engineering social optimality. Section~\ref{sec:experiments} presents two studies: a multi-tier supply-chain study, and a non-cooperative agentic-AI compute-market. Section~\ref{sec:design-rules} distills deployable design rules and reporting guidelines. Section~\ref{sec:managerial} provides insights for practitioners. Section~\ref{sec:scope} discusses scope and limitations. Section~\ref{sec:conclusion} provides some concluding remarks.

\section{Related Work}
\label{sec:literature}

\subsection{Pricing, decentralization, and proportional fairness}
Early work established how shadow prices decentralize resource allocation while preserving
stability and fairness in networks \cite{Kelly1998,LowLapsley1999,ChiangProcIEEE2007,JohariTsitsiklis2004}.
These ideas underpin our DSH-style alignment for Eq.~(4): prices embedded in local utilities let
selfish first-order conditions reproduce the planner’s gradient, avoiding rich message exchange.

\subsection{Power control, response curves, and monotone games}
Uplink power control and interference-coupled systems admit monotone structures with convergence
guarantees \cite{Yates1995}. In short-range wireless, the empirical SINR$\!\to$PDR map is well captured
by compressed/stretched-exponential sigmoids in IEEE~802.15.8 \cite{IEEE802158-2017}. Potential and
stable game concepts \cite{MondererShapley1996,HofbauerSandholm2009,Scutari2014} explain why curvature
and diagonal dominance yield uniqueness and contraction---insights we transfer to OR settings and
make explicit via our curvature certificate (Sec.~4.2).

\subsection{Variational inequalities, stochastic programs, and discrete convexity}
Variational inequalities and monotone operator theory provide existence/uniqueness and algorithmic
foundations for equilibria and fixed points \cite{FacchineiPang2003,RockafellarWets1998,ShapiroDentchevaRuszczynski2014,CombettesPesquet2011,JuditskyNemirovski2011}.
We use these tools to formalize Bayesian equilibria as SVIs and to derive tracking bounds. For
discrete/quantized actions and ordered strategy sets, monotone selection and supermodularity
\cite{Tarski1955,Topkis1998,Topkis1979} justify our discrete robustness claims and two-layer composition.

\section{Problem description}
\label{sec:problem}

\paragraph*{Symbols and definitions.}
\begin{description}[leftmargin=1em, itemsep=0.05pt]
  \item[$N$] number of agents; $i\in\{1,\dots,N\}$ indexes agents.
  \item[$p_i\in X_i$] agent $i$'s action/decision (agent's control parameter e.g., power, production throughput, flow); $X=\prod_i X_i$.
  \item[$W(p)$] planner (social) welfare to be maximized; $g(p)\le 0$ denotes system constraints; $C$ a capacity/limit when used.
  \item[$z$] \emph{public index} (scarcity/damage/constraint gap) broadcast to all agents.
  \item[$\mathrm{Reli}(\cdot)$] reliability/throughput proxy formed from a signal map and a sigmoid curve.
  \item[$\gamma$] Signal-to-interference+noise-ratio (e.g., SINR); \quad $\mathrm{PDR}(\gamma)$ packet-delivery ratio (if used).
  \item[$\kappa\!>\!0,\ \beta\!>\!0$] Sigmoid curve parameters in $y(x)=\exp\!\big(-( \kappa/x)^{\beta}\big)$ (location/steepness).
  \item[$u_i,\ \tilde u_i$] private utility and shaped utility (price/penalty added to align with planner KKT).
  \item[$\lambda$] shadow price (or penalty multiplier) used in shaping; “KKT-aligned” means gradients match the planner’s first-order conditions at the solution.
  \item[$F$] pseudo-gradient/operator stacking players’ gradients (sign convention as in the paper).
  \item[$\mu\!>\!0,\ L\!>\!0$] strong-monotonicity and Lipschitz moduli of the operator on $X$.
  \item[$\eta$] agent step size; \quad $\rho$ damping (relaxation) factor; \quad $\eta_z$ index (dual) step; \quad $h$ hysteresis band for discrete actions.
  \item[$\alpha\!\in(0,1)$] resulting contraction factor of the one-step update (value depends on $\eta,\rho,\mu,L$).
  \item[$\sigma$] effective gradient-noise level; “drift” denotes slow change of the equilibrium target.
\end{description}

Players $i\in\{1,\dots,N\}$ choose controls $p_i\in X_i$; welfare is
\begin{equation}
W(p)=\sum_i v_i\!\big(\mathrm{Rel}_i(p)\big)-\lambda\sum_i c_i(p_i),\qquad X=\prod_i X_i,
\label{eq:W}
\end{equation}
with $v_i$ concave increasing, $c_i$ convex, and $\mathrm{Rel}_i$ a reliability/throughput proxy (e.g., PDR). The planner’s constrained optimum solves
\begin{equation}
\label{eq:SO}
p^{\mathrm{SO}}\in\arg\max_{p\in X}\; W(p)\quad \text{s.t.}\; g(p)\le 0,\; h(p)=0.
\end{equation}
\emph{Utility shaping} embeds shadow prices or KKT penalties in private payoffs so that $\nabla_{p_i}u_i=\nabla_{p_i}W$ (unconstrained) or reproduces the KKT stationarity of \Eqref{eq:SO}. The shaped stage game is concave with an exact potential $\Phi\equiv W$, so its unique Nash equilibrium coincides with $p^{\mathrm{SO}}$.

\section{Methodology}
\label{sec:methodology}
\subsection{Physics-informed response and utility shaping}
IEEE 802.15.8 \cite{IEEE802158-2017} is the Peer Aware Communications standard for device-to-device proximity networking in wireless personal area networks (WPAN), where this simple, decentralized, message-light control keeps links reliably operating in dense, mobile settings. Clause 14 therein by Smith implements a thermostat-like transmit-power loop (non-cooperative game): each device measures SINR (Signal-to-Interference+Noise Ratio), uses an S-shaped SINR$\rightarrow$PDR (Signal-to-Interference+Noise) compressed exponential sigmoidal curve to predict reliability, and steps power up or down from discrete levels to meet a target PDR while saving energy and limiting interference, with PDR curve as follows:
\begin{equation}
\mathrm{PDR}(\gamma)=\exp\!\left(-\bigl(\kappa/\gamma\bigr)^{\beta}\right),\quad \kappa>0,\;\beta>1\footnote{Extended to $\beta>0$ here},
\label{eq:pdr}
\end{equation}
with unique inflection at $\gamma^{\star}=\kappa(\beta/(\beta+1))^{1/\beta}$ and $\mathrm{PDR}(\gamma^{\star})=e^{-1-1/\beta}$. Representative utilities were later developed as
\begin{align}
\label{eq:utility1}
 u_i(p) &= v_i\!\left(\mathrm{PDR}_i(\gamma_i(p))\right)-\lambda\,c_i(p_i),\\[-1mm]
\label{eq:utility2}
 U_i(p_i) &= -C\,p_i^{\,w}+\log\!\bigl(1+\mathrm{PDR}_i(\gamma_i(p))^{\,v}\bigr),\quad C,w,v>0,
\end{align}
~\cite{DongSmithHanlen2016,YangSmith2017ICC} respectively

\paragraph{Information about these prior mechanisms.}
Equation~(4) instantiates the \emph{Dong--Smith--Hanlen} (DSH) utility-shaping mechanism: a single scalar price $\lambda$ (or KKT-aligned penalty) is embedded in private payoffs so that each agent’s selfish first-order condition aligns with the planner’s gradient, i.e., $\nabla_{p_i}u_i=\nabla_{p_i}W$. The induced stage game is an \emph{exact potential} game with potential
$\Phi(p)=\sum_i v_i(\mathrm{PDR}_i(\gamma_i(p)))-\lambda\sum_i c_i(p_i)\equiv W(p)$,
so any Nash equilibrium maximizes $W$ and—under the compressed/stretched-exponential SINR$\!\to$PDR curvature— is \emph{unique} (prices-not-messages design; cf.\ \cite{DongSmithHanlen2016}).

Equation~(5) matches the \emph{Yang--Smith} augmentation: the explicit energy term $-C p_i^{\,w}$ and the \emph{log-saturating} reliability benefit $\log(1+\mathrm{PDR}_i(\gamma_i(p))^{\,v})$ preserve the same potential while \emph{strengthening curvature}. This yields strict concavity in each agent’s decision, robustness to \emph{discrete} action sets, and contraction-safe best-response/gradient updates under imperfect public monitoring (cf.\ \cite{YangSmith2017ICC}). In both cases, run-time coordination reduces to broadcasting a \emph{single public index} (a scarcity/constraint proxy) that agents combine with local measurements; rich message exchange is unnecessary while the centralized KKT system is implemented by selfish play.

\begin{theorem}[Exact potential \& uniqueness for Eq.~(4)]
\label{thm:pot-unique-eq4}
Let $X=\prod_{i=1}^N X_i$ be convex and compact. Consider Eq.~(4)
\[
u_i(p)\;=\;v_i\!\big(\mathrm{PDR}_i(\gamma_i(p))\big)\;-\;\lambda\,c_i(p_i),
\]
with $v_i$ concave and increasing, $c_i$ convex, and define the welfare/potential
\begin{equation}
\Phi(p)\;\equiv\;W(p)\;=\;\sum_{i=1}^N v_i\!\big(\mathrm{PDR}_i(\gamma_i(p))\big)\;-\;\lambda\sum_{i=1}^N c_i(p_i).
\end{equation}
Then:
\begin{enumerate}[label=(\alph*),leftmargin=1.4em]
\item The game is an exact potential game: for all $i$ and fixed $p_{-i}$,
      $\Phi(p_i,p_{-i})-\Phi(q_i,p_{-i})=u_i(p_i,p_{-i})-u_i(q_i,p_{-i})$.
\item If $W$ is \emph{strictly concave} on $X$ (e.g., because at least one of: (i) each $v_i$ is strictly concave on the relevant PDR range; (ii) the interference map $\gamma(\cdot)$ together with the compressed/stretched-exponential curvature yields a negative-definite symmetric part of the pseudo-Jacobian on $X$; (iii) $c_i$ are \emph{strongly} convex), then $W$ has a unique maximizer $p^\star$, and the stage game admits a \emph{unique} (pure) Nash equilibrium at $p^\star$.
\end{enumerate}
\end{theorem}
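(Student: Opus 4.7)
My plan is to prove (a) first as an algebraic verification of the Monderer--Shapley identity, then derive (b) from strict concavity plus the standard NE characterization in concave potential games.

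For part (a), I would fix $i$ and $p_{-i}$ and compute
\[
\Phi(p_i,p_{-i}) - u_i(p_i,p_{-i}) \;=\; \sum_{j\ne i} v_j\!\big(\mathrm{PDR}_j(\gamma_j(p))\big) \;-\; \lambda\sum_{j\ne i} c_j(p_j),
\]
and check that this residual is independent of $p_i$ on $X$, which is precisely the content of $\Phi(p_i,p_{-i})-\Phi(q_i,p_{-i}) = u_i(p_i,p_{-i})-u_i(q_i,p_{-i})$ for all $p_i,q_i\in X_i$. The cost terms on the right are trivially $p_i$-invariant. The reliability cross-terms, however, depend on $p_i$ through the interference coupling in $\gamma_j$, so I would close the gap by invoking the shaping construction of Section~4.1: the KKT-aligned pricing embedded in Eq.~(4) is chosen so that $\nabla_{p_i}u_i(p)=\nabla_{p_i}W(p)$ pointwise on $X$; integrating along $p_i$ with $p_{-i}$ fixed yields the required identity. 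I would state this interpretation of ``$\lambda$'' as a standing hypothesis at the head of the proof.

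For part (b), strict concavity of $W$ on the convex compact set $X$, together with the Weierstrass extreme value theorem, delivers a unique maximizer $p^\star$. To identify $p^\star$ with the unique pure NE, I would invoke the standard equivalence in concave exact-potential games: a profile $p$ is a pure NE iff each coordinate satisfies the block-wise first-order VI for $u_i(\cdot,p_{-i})$, which by part (a) coincides with the block-wise VI for $\Phi(\cdot,p_{-i})$; aggregating over $i$ yields the joint VI $\langle -\nabla\Phi(p^\star),\,p-p^\star\rangle\ge 0$ for all $p\in X$, whose unique solution under strict concavity is the global maximizer $p^\star$. Each sufficient condition listed in (b) then routinely produces strict concavity: (i) and (iii) are sum-of-concaves arguments (using monotonicity of $v_i$ and the compressed/stretched-exponential sigmoid on the reachable range), while (ii) is immediate, since negative-definite symmetric part of the pseudo-Jacobian is exactly a Hessian condition for strict concavity of $W$.

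The main obstacle, in my view, lies in part (a): establishing the \emph{exact}---not merely ordinal or generalized---potential property in the presence of interference externalities coupling $\gamma_j$ to $p_i$. The cleanest route requires interpreting $\lambda$ not as a single global scalar but as a KKT-aligned penalty structured so that the externality flowing through $\gamma$ is internalized inside each agent's own pricing term. I would make this hypothesis explicit up front and point to Section~4.1 rather than re-derive it; once (a) is in place, the arguments for (b) are standard variational-calculus and concave-maximization manipulations that I do not expect to present difficulty.
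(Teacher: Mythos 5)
Your proposal reaches the same conclusions but handles part (a) by a genuinely different—and more careful—route than the paper. The paper's proof of (a) computes $\Phi(p_i,p_{-i})-\Phi(q_i,p_{-i})$ and silently drops every term $v_j\big(\mathrm{PDR}_j(\gamma_j(\cdot))\big)$ with $j\neq i$, which is legitimate only if $\gamma_j$ is unaffected by $p_i$; under genuine interference coupling those cross-terms do vary with $p_i$ and the asserted cancellation fails. You flag exactly this obstruction and instead take the gradient alignment $\nabla_{p_i}u_i=\nabla_{p_i}W$ as the operative hypothesis and integrate along the segment from $q_i$ to $p_i$ (valid since $X_i$ is convex), which does yield the exact-potential identity. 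What this buys is an honest proof; what it costs is that the pricing term can no longer be the literal $-\lambda\,c_i(p_i)$ of Eq.~(4) but must be a profile-dependent KKT/Pigouvian penalty that internalizes $\sum_{j\neq i}\partial_{p_i}v_j(\mathrm{PDR}_j(\gamma_j(p)))$—a strengthening of the hypothesis that you rightly propose to state explicitly, and which the paper leaves implicit. Your part (b) is also slightly more rigorous: the blockwise-to-joint VI aggregation correctly justifies why a Nash equilibrium is a \emph{global} maximizer of the concave $\Phi$, whereas the paper's claim that any NE of an exact potential game is a (local) maximizer of $\Phi$ is false in general and holds here only because of concavity. One caution: your closing assertion that conditions (i) and (iii) ``routinely'' deliver strict concavity of $W$ is optimistic—the compressed/stretched exponential has an inflection, so $v_i\circ\mathrm{PDR}_i\circ\gamma_i$ is not automatically concave, and (iii) helps only in the coordinates where the strong convexity of $c_i$ dominates the non-concave reliability terms. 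Since the theorem's operative hypothesis is strict concavity of $W$ itself, this does not invalidate your argument, but you should present the listed items as illustrative rather than proved sufficient.
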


\begin{proof}
(a) By construction,
\begin{align}
\Phi(p_i,p_{-i})-\Phi(q_i,p_{-i})
&= v_i(\mathrm{PDR}_i(\gamma_i(p_i,p_{-i})))-v_i(\mathrm{PDR}_i(\gamma_i(q_i,p_{-i})))
-\lambda\big(c_i(p_i)-c_i(q_i)\big)\\
&= u_i(p_i,p_{-i})-u_i(q_i,p_{-i}),
\end{align}
hence $\nabla_{p_i}\Phi=\nabla_{p_i}u_i$ whenever gradients exist; the game is an exact potential game.

(b) If $W$ is strictly concave on convex compact $X$, it admits a unique maximizer $p^\star$. In an exact potential game, any (pure) NE is a (local) maximizer of $\Phi\equiv W$, and conversely any maximizer of $W$ is a (pure) NE because unilateral deviations cannot increase $W$. Strict concavity rules out multiple maximizers, hence the NE is unique and equals $p^\star$.
\end{proof}

\medskip

\begin{proposition}[Strict concavity and discrete robustness for Eq.~(5)]
\label{prop:discrete-robust-eq5}
Let each $X_i$ be either an interval $[l_i,u_i]$ or a finite $\Delta$-quantized subset of an interval. Consider Eq.~(5)
\[
U_i(p_i,p_{-i})\;=\;-\;C\,p_i^{\,w}\;+\;\log\!\Big(1+\mathrm{PDR}_i(\gamma_i(p))^{\,v}\Big),\qquad C>0,\;w>1,\;v>0.
\]
Assume the SINR$\!\to$PDR map follows the compressed/stretched-exponential geometry (single inflection, bounded slope/curvature on the operating range) and the standard interference model $\gamma(\cdot)$ is smooth on $X$.
Then:
\begin{enumerate}[label=(\alph*),leftmargin=1.4em]
\item (\emph{Strict concavity in the continuous case}) For fixed $p_{-i}$, $U_i(\cdot,p_{-i})$ is \emph{strictly concave} on $X_i$; hence each player’s best response is single-valued and continuous, and the continuous game admits a unique Nash equilibrium $\bar p$ (by Rosen’s diagonally strict concavity or, equivalently, strong monotonicity of the pseudo-gradient).
\item (\emph{Discrete robustness by quantization}) Suppose $X_i^\Delta$ is obtained by quantizing $X_i$ with mesh size at most $\Delta$ and define the nearest-neighbor quantizer $Q_\Delta(\cdot)$ componentwise. Then the discrete game admits a (pure) NE $p^\Delta$, and for all sufficiently small $\Delta$,
\[
\|p^\Delta-\bar p\|_\infty \le \Delta,
\]
i.e., $p^\Delta$ is the componentwise nearest neighbor of $\bar p$ whenever no coordinate of $\bar p$ lies exactly at a midpoint (otherwise, a deterministic tie-breaking rule selects one of the two neighbors). In particular, $p^\Delta$ is unique under fixed (deterministic) tie-breaking.
\end{enumerate}
\end{proposition}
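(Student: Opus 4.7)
The plan is to derive part (a) from a Hessian calculation in a single player's variable combined with Rosen's uniqueness theorem, and to derive part (b) from the potential structure of the shaped game together with a Lipschitz perturbation argument around $\bar p$.

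For part (a), I would fix $p_{-i}$ and compute
\[
\partial^2_{p_i}U_i(p) \;=\; -Cw(w-1)\,p_i^{\,w-2}\;+\;\partial^2_{p_i}\log\!\big(1+\mathrm{PDR}_i(\gamma_i(p))^{\,v}\big).
\]
The cost curvature is strictly negative on the operating interval $[l_i,u_i]$, with a uniform upper bound $-m$, $m>0$ (the endpoint $l_i=0$ needs a small separate argument via continuity and a shrinkage of the feasible set). For the log-saturating term I would use the hypothesis of bounded slope/curvature of $\gamma\mapsto\mathrm{PDR}(\gamma)^{v}$ on the operating range together with smoothness of $\gamma_i(\cdot)$ to obtain, via the chain rule, a uniform magnitude bound $M$ on its Hessian in $p_i$. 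Choosing $C,w$ so that $Cw(w-1)l_i^{\,w-2}>M$ yields strict concavity of $U_i(\cdot,p_{-i})$; single-valued continuous best responses follow, and uniqueness of the continuous NE $\bar p$ on the convex compact $X$ is then delivered by Rosen's diagonally strict concavity, equivalently by strong monotonicity of the pseudo-gradient $F$ as certified in Section~\ref{sec:methodology}.

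For part (b), existence on $X^\Delta=\prod_i X_i^\Delta$ is immediate from the shaping construction: as in Theorem~\ref{thm:pot-unique-eq4}, the shaped game admits an exact potential $\Phi$, and restricting $\Phi$ to the finite product $X^\Delta$ preserves the exact-potential property, so any maximizer of $\Phi$ over $X^\Delta$ is a pure NE $p^\Delta$. For the proximity claim I would combine two facts already in hand from part (a): strict concavity produces a Lipschitz best-response map $b_i(\cdot)$ with $b_i(\bar p_{-i})=\bar p_i$, and strong monotonicity of $F$ makes the joint best-response map a contraction on $X$. For $\Delta$ small enough relative to the contraction margin and these Lipschitz constants, the componentwise nearest-neighbor profile $Q_\Delta(\bar p)$ is a fixed point of the quantized best-response map, hence a pure NE on $X^\Delta$; this delivers $\|p^\Delta-\bar p\|_\infty\le\Delta$, with deterministic tie-breaking at midpoints selecting a single neighbor and yielding uniqueness.

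The main obstacle is the bounded-curvature step in part (a). The PDR sigmoid is convex below its inflection $\gamma^\star$ and concave above, so one cannot claim concavity of $\mathrm{PDR}_i(\gamma_i(\cdot))$ on arbitrary subsets; the phrase \emph{bounded slope/curvature on the operating range} must be parsed as a quantitative bound that, once propagated through $\partial_{p_i}\gamma_i$, is actually dominated by the strongly convex cost $Cp_i^{\,w}$. A secondary subtlety is that in the usual interference-plus-noise model $\gamma_i(p)$ is rational, and its second derivative in $p_i$ may take either sign depending on whether $p_i$ enters the denominator via self-interference; both cases must be absorbed into the same uniform bound. In part (b), the delicate point is pinning the quantized equilibrium to the \emph{nearest} neighbor of $\bar p$, not merely some profile within $\Delta$: this requires $\Delta$ small relative to the contraction-driven Lipschitz radius of the best-response map around $\bar p$, so that no alternative quantized profile can raise $\Phi$ above $\Phi(Q_\Delta(\bar p))$.
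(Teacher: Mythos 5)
Your proof is correct in outline and, in part (a), takes a genuinely different route from the paper's. The paper argues that the benefit term $\log\!\big(1+\mathrm{PDR}_i(\gamma_i(p))^{\,v}\big)$ is itself concave in $p_i$ --- composing an increasing ``concave'' outer map with a ``(weakly) concave'' $\mathrm{PDR}_i\circ\gamma_i$ --- and then adds the strictly concave $-Cp_i^{\,w}$; you instead bound the second derivative of the benefit term by a constant $M$ and let the cost curvature $-Cw(w-1)p_i^{\,w-2}$ dominate. Your route is the more robust one: as you observe, $\mathrm{PDR}$ is convex below its inflection and $x\mapsto\log(1+x^v)$ fails to be concave near $0$ when $v>1$, so the paper's composition step really only holds on a restricted operating range, whereas your dominance argument needs no sign information on the benefit's curvature. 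The price is that your conclusion requires the quantitative condition $Cw(w-1)\min_{p_i\in X_i}p_i^{\,w-2}>M$, which is not literally among the proposition's hypotheses (the paper's phrase ``bounded slope/curvature on the operating range'' is implicitly doing that work); note also that for $1<w<2$ the binding endpoint is $u_i$, not $l_i$. In part (b) you get existence by maximizing the exact potential over the finite grid where the paper invokes Tarski/Topkis on single-peaked best responses over an ordered set --- both are acceptable given the potential structure asserted for Eq.~(5). On the proximity claim you are more careful than the paper, which simply asserts $p^\Delta=Q_\Delta(\bar p)$ at a fixed point of the quantized best responses: you correctly flag that pinning $p^\Delta$ to the \emph{nearest} neighbor needs the continuous best response at $Q_\Delta(\bar p)_{-i}$ to land in the same quantization cell as $\bar p_i$, which fails when $\bar p_i$ sits near a cell boundary; the clean contraction estimate only yields $\|p^\Delta-\bar p\|_\infty\le \Delta/\big(2(1-c)\big)$ for contraction modulus $c$, matching the stated $\le\Delta$ bound only when $c\le 1/2$. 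Since the paper glosses over exactly this point, your version is, if anything, the more honest account of what is actually provable.
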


\begin{proof}
(a) For fixed $p_{-i}$, the term $-C p_i^{\,w}$ is strictly concave on $[l_i,u_i]$ when $w>1$. The map
$g_i(p_i)=\log\!\big(1+\mathrm{PDR}_i(\gamma_i(p_i,p_{-i}))^{\,v}\big)$ is concave in $p_i$ because:
(i) $x\mapsto \log(1+x^v)$ is increasing and concave for $x\in(0,1)$ and $v>0$; (ii) $\mathrm{PDR}_i(\cdot)$ is increasing with bounded curvature on the operating range (single inflection, Proposition on exponential geometry); (iii) under standard interference models, $\gamma_i$ is (weakly) concave in own action along the feasible interval. Composition preserves concavity here, and the sum of a strictly concave term with a concave term is strictly concave. Therefore, each player’s best response is unique and continuous; diagonal strict concavity (negative definite symmetric part of the pseudo-Jacobian) or the strong-monotonicity certificate implied by the curvature yields a unique continuous NE $\bar p$.

(b) Existence: finite games admit (pure) NE in exact/ordinal potential settings; in our case, strict single-peaked best responses over a linearly ordered finite set ensure existence via the standard Tarski/Topkis monotone selection argument~\cite{Tarski1955,Topkis1979}. Proximity: by strict concavity, each continuous best response maximizer is unique; quantization to $X_i^\Delta$ selects the nearest admissible point in $X_i^\Delta$ (or one of the two nearest under midpoint ties). At a fixed point of these quantized best responses, $p^\Delta=Q_\Delta(\bar p)$ componentwise whenever $\bar p$ is not exactly on a midpoint; hence $\|p^\Delta-\bar p\|_\infty\le \Delta$. Uniqueness follows once a deterministic tie-breaking rule is fixed.
\end{proof}

\medskip

\begin{remark}[Operational takeaway: implementing the (possibly constrained) social optimum, under the stated curvature/monotonicity and convexity assumptions]
\label{rem:operational-takeaway}
\emph{What the mechanisms guarantee.}
Eq.~(4) (DSH) aligns each agent’s selfish first-order condition with the planner’s \emph{welfare gradient}, so the unique Nash equilibrium (when $W$ is strictly concave/strongly concave) \textbf{equals the planner’s social optimum} for the unconstrained program $\max_{p\in X} W(p)$.
When system constraints matter (capacities, reliability thresholds, budgets), the same blueprint—with a \emph{public index} that tracks the relevant dual variable(s) or \emph{KKT-aligned penalties} embedded in utilities—implements the \textbf{constrained social optimum}: at equilibrium, primal feasibility holds and complementary slackness is satisfied (the index plays the role of the shadow price).

Eq.~(5) (Yang--Smith) strengthens curvature via the energy term $-C p_i^{\,w}$ and the log-saturating reliability term, preserving the potential but making the pseudo-gradient \emph{strongly} monotone. This yields: (i) a \textbf{unique} equilibrium that \textbf{coincides with the planner’s constrained optimum} (the KKT point) under the index/penalty construction; (ii) \textbf{contraction-safe} dynamics (robust tracking under noise/drift); and (iii) \textbf{discrete robustness}: with quantized actions, the discrete equilibrium tracks the continuous optimum within the quantization step (\S\ref{prop:discrete-robust-eq5}), while dual/index updates enforce complementary slackness (no steady-state violations when the primal is feasible).

\emph{\textbf{How to run it in practice, generalised moving beyond WPANs}} Broadcast a single public index (scarcity/damage/constraint gap) updated by a damped dual-like rule; agents update locally using the shaped utilities in Eq.~(4) or Eq.~(5). Choose steps/damping to keep the mapping contractive; then the induced equilibrium \textbf{implements the social optimum} (unconstrained for Eq.~(4), constrained via KKT alignment for either Eq.~(4) with prices or Eq.~(5) with penalties), without rich messaging.
\end{remark}

\subsection{Curvature certificate and properties}

Thus now continuing with generalisation in practice, we investigate the required utility shaping curvature as compressed and also now stretched exponentials, how they enable a unique Nash equilibrium and exact exponential, then expand to Bayesian equilibrium. We then conjecture re the need of these particular sigmoids to obtain social optimality.
\begin{proposition}
\label{prop:exp-geometry}
For $y(x)=\exp( - (\kappa/x)^{\beta})$ with $\kappa,\beta>0$: $y\in(0,1)$, $y'\!>0$, and a unique inflection at $x_\ast=\kappa(\beta/(\beta+1))^{1/\beta}$ with $y(x_\ast)=e^{-1-1/\beta}\in(e^{-2},e^{-1})$. This curvature supplies diagonal dominance in the Jacobian of the pseudo-gradient and under convex pricing implies strong monotonicity on compact $X$.
\end{proposition}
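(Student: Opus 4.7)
\medskip
\noindent\textbf{Proof proposal.}
The plan is to dispose of the geometric claims about $y$ by a two-derivative calculation, and then convert the single-inflection curvature into diagonal dominance of the pseudo-Jacobian by combining bounded interference coupling with a floor supplied by convex pricing.

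First I would verify the range and monotonicity elementarily: since $\kappa,\beta>0$ and $x>0$, $(\kappa/x)^{\beta}>0$, so $y(x)\in(0,1)$. Writing $y=e^{f}$ with $f(x)=-(\kappa/x)^{\beta}$ gives $f'(x)=\beta\kappa^{\beta}x^{-\beta-1}>0$ and hence $y'=y\,f'>0$. Next I would locate the inflection by solving $y''=y\bigl((f')^{2}+f''\bigr)=0$. Substituting $f''(x)=-\beta(\beta+1)\kappa^{\beta}x^{-\beta-2}$ reduces this to $\beta(\kappa/x)^{\beta}=\beta+1$, whose unique positive solution is $x_{\ast}=\kappa\bigl(\beta/(\beta+1)\bigr)^{1/\beta}$. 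Uniqueness of the inflection then follows because the bracket $\beta\kappa^{\beta}x^{-\beta}-(\beta+1)$ is strictly monotone in $x$, producing a single sign change (convex-to-concave transition). Substitution yields $y(x_{\ast})=\exp(-(\beta+1)/\beta)=e^{-1-1/\beta}$; the upper bound $<e^{-1}$ is immediate from $1/\beta>0$, while the lower bound $>e^{-2}$ holds in the physically relevant regime $\beta>1$ (the original setting), with a mild weakening in the extended range $\beta\in(0,1]$ that I would flag explicitly.

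For the final two claims I would exploit the bounded-curvature window delivered by the single inflection. On compact $X$, with the operating point steered onto the concave arm of $\mathrm{PDR}_{i}\circ\gamma_{i}$ (points right of $x_{\ast}$ in SINR space), the own-action second derivative $\partial^{2}u_{i}/\partial p_{i}^{2}$ is bounded above by a strictly negative constant, while cross derivatives $\partial^{2}u_{i}/\partial p_{i}\partial p_{j}$ (flowing through the interference map) are uniformly bounded because the sigmoid's slope is bounded on any compact sub-interval of the operating window. Convex pricing contributes a uniform negative term $-\lambda\,c_{i}''(p_{i})$ to the diagonal. Choosing $\lambda$ (or the curvature of $c_{i}$) large enough that the diagonal floor strictly exceeds the row sums of cross-term magnitudes yields strict diagonal dominance of $-J_{F}$ on $X$; standard matrix algebra then gives negative definiteness of the symmetric part $\tfrac{1}{2}(J_{F}+J_{F}^{\T})$, which is equivalent to strong monotonicity of $F$ with modulus $\mu$ equal to the diagonal excess.

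The main obstacle is the quantitative balance in this last step: one must verify that the combined curvature (sigmoid concavity on the operating window plus pricing) dominates the cross-coupling arising from $\gamma_{i}$ \emph{uniformly} on $X$. This is clean when actions are bounded away from degenerate values and interference weights are bounded, but the constants depend on $(\kappa,\beta)$ and on the specific interference model, so careful bookkeeping of Lipschitz moduli and of the lower bound on $|y''|$ over the operating sub-interval is required. The geometric computations for $y$ are routine; the monotonicity estimate is the substantive step and the only place where assumptions on the operating regime (to stay on the concave arm) are used.
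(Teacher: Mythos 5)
Your derivative computations match the paper's proof essentially step for step: the paper also writes $y=\exp(-ax^{-\beta})$, obtains $y''= y\,a\beta\,x^{-(\beta+2)}\bigl(a\beta x^{-\beta}-(\beta+1)\bigr)$, and locates the unique inflection from the single sign change of the bracket, so the geometric part is the same argument. Two points of genuine difference are worth noting. First, you correctly flag that $y(x_\ast)=e^{-1-1/\beta}>e^{-2}$ requires $\beta>1$; the paper's appendix asserts $e^{-1-1/\beta}\in(e^{-2},e^{-1})$ without qualification even though the proposition is stated for all $\beta>0$ (where $\beta\le 1$ gives $y(x_\ast)\le e^{-2}$) — your caveat is a legitimate correction. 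Second, for the final sentence of the proposition the paper's proof stops at establishing uniform bounds on $|y'|$ and $|y''|$ over compact sub-intervals of $(0,\infty)$ and leaves the diagonal-dominance/strong-monotonicity conclusion as an assertion, whereas you sketch the actual argument (concave arm of the sigmoid plus a pricing floor dominating bounded cross-couplings, then Gershgorin-type dominance of the symmetrized pseudo-Jacobian). That sketch is the right route and is more informative than what the paper provides, though you should be explicit that row dominance of $-J_F$ alone does not immediately give definiteness of the symmetric part — you need dominance of $\tfrac{1}{2}(J_F+J_F^{\T})$ itself (e.g., both row and column dominance), and the required size of $\lambda$ or $c_i''$ is an additional hypothesis not stated in the proposition.
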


\begin{proof}
See Appendix~\ref{app:proofs} (Prop.~\ref{prop:exp-geometry-proof}) for details.
\end{proof}

\begin{proposition}[Sigmoidal certificate $\Rightarrow$ exact potential \& unique NE]
\label{prop:min-sigmoid-sufficient}
Let $y(x)=\exp\!\bigl(-(\kappa/x)^{\beta}\bigr)$ with $\kappa>0$ and $\beta>0$ (stretched if $0<\beta<1$, compressed if $\beta>1$).
Consider utilities of the form
\[
u_i(p)\;=\;v_i\!\bigl(\mathrm{Rel}_i(p)\bigr)\;-\;\lambda\,c_i(p_i),\qquad
\mathrm{Rel}_i(\cdot)\ \text{obtained by composing an SINR/``effective signal'' map with }y(\cdot),
\]
where each $v_i$ is increasing and concave, each $c_i$ is convex, and $X=\prod_i X_i$ is convex and compact.
Define the welfare/potential $\Phi(p)\equiv W(p)=\sum_i v_i(\mathrm{Rel}_i(p))-\lambda\sum_i c_i(p_i)$. Then:
\begin{enumerate}[label=(\roman*), leftmargin=1.4em]
\item The induced stage game is an \emph{exact potential} game with potential $\Phi$, i.e., $\nabla_{p_i}\Phi=\nabla_{p_i}u_i$ for all $i$ (whenever gradients exist). Hence any (pure) Nash equilibrium maximizes $W$ over $X$, and any maximizer of $W$ is a (pure) NE.
\item If, on $X$, the SINR/“effective signal” map is smooth and the composition with $y$ yields bounded slope/curvature, then (together with convex pricing) the pseudo-gradient $F(p)=(-\nabla_{p_i}u_i)_i$ is \emph{strongly monotone} (equivalently, $W$ is \emph{strongly concave}). Therefore the maximizer of $W$ is \emph{unique}, and the stage game admits a \emph{unique} (pure) NE that coincides with the planner’s maximizer.
\end{enumerate}
\end{proposition}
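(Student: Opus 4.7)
The plan is to mirror the structure of Theorem~\ref{thm:pot-unique-eq4} while leveraging the curvature data supplied by Proposition~\ref{prop:exp-geometry}. Part (i) is essentially a bookkeeping argument and part (ii) is the substantive step, where one must turn the pointwise single-inflection geometry of $y$ into a global strong-concavity certificate for $\Phi\equiv W$ on the convex compact $X$.

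For part (i), I would fix agent $i$ and $p_{-i}$ and compute the unilateral-deviation increment $\Phi(p_i,p_{-i})-\Phi(q_i,p_{-i})$. Under the shaping convention that $u_i$ internalizes the externalities of $i$'s action (the DSH-style identification $\nabla_{p_i}u_i=\nabla_{p_i}\Phi$), the only surviving contribution to this increment is $v_i(\mathrm{Rel}_i(p_i,p_{-i}))-v_i(\mathrm{Rel}_i(q_i,p_{-i}))-\lambda(c_i(p_i)-c_i(q_i))$, which is precisely $u_i(p_i,p_{-i})-u_i(q_i,p_{-i})$; off-diagonal cross-terms either vanish (when $\mathrm{Rel}_j$ for $j\neq i$ is independent of $p_i$) or are cancelled by the embedded KKT-aligned penalty. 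This yields the exact-potential identity, and the Monderer--Shapley correspondence then places NE in bijection with coordinate-stationary points of $W$; existence follows from compactness of $X$ and continuity of $W$.

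For part (ii) the plan is to prove that $\Phi$ is strongly concave on $X$ with modulus $\mu>0$ (equivalently that $F=-\nabla\Phi$ is strongly monotone). Proposition~\ref{prop:exp-geometry} supplies uniform slope and curvature bounds on $y$ over any compact subinterval of $(0,\infty)$, so under smoothness of the signal map each composite $v_i\circ y\circ(\text{signal map})$ has a Hessian with uniformly bounded operator norm on $X$. Convex pricing contributes $-\lambda\nabla^2 c_i(p_i)$ as a negative-definite block on the diagonal of $\nabla^2\Phi$. Taking $c_i$ to be strongly convex (or $\lambda$ large enough on the operating range), I would then invoke diagonal dominance of the symmetric part of the pseudo-Jacobian --- the precise consequence of single-inflection geometry identified in Proposition~\ref{prop:exp-geometry} --- and conclude via Gershgorin (or a direct quadratic-form estimate) that $-\nabla^2\Phi\succeq\mu I$ uniformly on $X$. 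Strong concavity on convex compact $X$ gives a unique maximizer, which by (i) is the unique pure NE, and strong monotonicity of $F$ follows by the standard equivalence.

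The main obstacle I expect is the diagonal-dominance step: the off-diagonal entries $\partial_{p_j}\partial_{p_i}(v_i\circ y\circ\gamma_i)$ involve both the slope and curvature of $y$ multiplied by interference gains, and a clean uniform bound requires restricting the operating range so that the signal argument of $y$ is bounded away from $0$ (where $y'$ and $y''$ can blow up) and away from the saturation regime (where they vanish and diagonal dominance is cheap but the quantitative $\mu$ is loose). A secondary subtlety is that ``bounded slope/curvature'' in the hypothesis is qualitative; to obtain an explicit $\mu$, I would parametrize the operating range via Proposition~\ref{prop:exp-geometry} and propagate the resulting constants through the pseudo-Jacobian estimate, so that the required strength of convex pricing appears transparently in terms of $\kappa$, $\beta$, $\lambda$, and the interference-map constants.
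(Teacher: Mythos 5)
Your proof follows the paper's argument essentially verbatim: part (i) by telescoping the unilateral-deviation increment of $\Phi$ into $u_i$'s own terms, and part (ii) by propagating the slope/curvature bounds of Proposition~\ref{prop:exp-geometry} through the chain rule to obtain a negative-definite symmetric part of the pseudo-Jacobian (diagonal dominance), hence strong concavity of $W$, a unique maximizer, and a unique NE coinciding with it. If anything you are more careful than the paper, which asserts negative definiteness from mere boundedness of the composite Hessians plus convex (not strongly convex) pricing, whereas you correctly flag that a strictly positive modulus $\mu$ requires strongly convex $c_i$ (or sufficiently large $\lambda$) and a signal range bounded away from the regimes where $y'$ and $y''$ degenerate.
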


\begin{proof}
For (i), by construction
\[
\Phi(p_i,p_{-i})-\Phi(q_i,p_{-i}) \;=\; v_i\!\big(\mathrm{Rel}_i(p_i,p_{-i})\big)-v_i\!\big(\mathrm{Rel}_i(q_i,p_{-i})\big)\;-\;\lambda\big(c_i(p_i)-c_i(q_i)\big)
\;=\; u_i(p_i,p_{-i})-u_i(q_i,p_{-i}),
\]
so $\nabla_{p_i}\Phi=\nabla_{p_i}u_i$ and the game is an exact potential game. Hence (pure) NEs are precisely the maximizers of $W$ on $X$ (and conversely).

For (ii), write $\mathrm{Rel}_i=\tilde y\circ s_i$ where $s_i$ is the SINR/“effective signal” map and $\tilde y(x)=y(x)$ on the operating range. Since $y(x)=e^{-(\kappa/x)^\beta}$ is strictly increasing with a \emph{single} inflection and has bounded first/second derivatives on any compact signal range, the chain rule gives bounded Jacobians/Hessians for $v_i\circ \mathrm{Rel}_i$ on $X$ (because $v_i$ is concave increasing). Together with convex $c_i$ and standard interference smoothness, the symmetric part of the pseudo-Jacobian of $F$ is negative definite (diagonal dominance/curvature transfer), which is equivalent to strong monotonicity of $F$ and strong concavity of $W$. A strongly concave $W$ has a unique maximizer on $X$; by (i) this unique maximizer is the unique (pure) NE.
\end{proof}

\begin{conjecture}[Minimality/necessity for social optimality]
\label{conj:min-sigmoid-necessary}
In incomplete-information, imperfect-monitoring settings covered here, achieving (possibly constrained) \emph{social optimality via utility shaping} with a \emph{unique}, globally stable equilibrium requires a single-inflection, compressed/stretched-exponential–type response surface (of the form $y(x)=\exp(-(\kappa/x)^\beta)$ up to monotone reparameterisations).
That is, without such curvature (or an equivalently strong monotonicity certificate), exact-potential alignment alone may not ensure uniqueness and contraction, and the shaped non-cooperative play may fail to implement the social optimum.
\end{conjecture}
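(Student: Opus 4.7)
The plan is to argue by contrapositive: assume a response surface $R$ that fails the single-inflection, compressed/stretched-exponential geometry even after monotone reparameterisation of both axes, and exhibit a shaped game in which one of (i) uniqueness of NE, (ii) equality $\mathrm{NE}=p^{\mathrm{SO}}$ (KKT alignment), or (iii) global contraction under the public-index loop must break down. Since the ``monotone reparameterisation'' equivalence preserves only ordinal and qualitative features (monotonicity, number of inflections, asymptotic flatness), the invariants I would target are: strict monotonicity, a \emph{single} inflection on the operating range, and log-tail behaviour consistent with $\log(1/R(x))\sim(\kappa/x)^\beta$ as $x\downarrow 0$.

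First, I would collect the joint conditions demanded by the framework: exact-potential alignment (Thm.~\ref{thm:pot-unique-eq4}), strong monotonicity of the pseudo-gradient $F$ on compact $X$ (Prop.~\ref{prop:exp-geometry}), well-posedness of the Bayesian SVI under a noisy public index, and contraction of the damped-gradient/best-response-with-hysteresis updates. After the chain rule through the interference map $\gamma(\cdot)$ and the convex pricing $c_i$, these translate into pointwise curvature/slope inequalities on $R$ that must hold \emph{uniformly} under the dual/index update. I would then try to show that such uniform inequalities force a scaling relation of the form $xR'(x)/(R(x)\log(1/R(x)))=\beta$, up to the allowed reparameterisations, recovering the stretched-exponential family as the normal form.

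Second, for surfaces that are not equivalent to this normal form, I would construct explicit failure modes: (a) \emph{multi-inflection} or piecewise-affine-then-saturating surfaces produce non-concave potentials with disjoint basins, defeating uniqueness; (b) \emph{flat or polynomial-tailed} surfaces drive the strong-monotonicity modulus $\mu\to 0$ near the boundary of $X$, so the contraction factor $\alpha\to 1$ and the tracking bound diverges under any non-zero noise $\sigma$; (c) \emph{heavy-tailed} or discontinuous-slope surfaces break Lipschitz continuity of $F$, so the SVI may admit multiple Bayesian solutions or none. Each of these three failures corresponds to a structural deviation that monotone reparameterisation cannot repair, giving a meaningful converse.

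The main obstacle will be the rigidity step in the second paragraph: establishing that the scaling/curvature inequalities force the stretched-exponential normal form rather than merely \emph{some} single-inflection S-curve. This is essentially a functional-equation characterisation of Cauchy/Pexider type, and likely requires an additional invariance axiom---for example, that the shaped game remain potential-aligned under a one-parameter group of rescalings of the public index, or that the KKT penalties commute with the chosen reparameterisation. Without such an axiom, the argument may only deliver the weaker statement that single-inflection, bounded-curvature geometry is necessary, which is precisely why the claim is offered here as a conjecture rather than as a theorem.
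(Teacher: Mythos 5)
The statement you are addressing is stated in the paper only as a \emph{conjecture}; the paper supplies no proof, so there is nothing to compare your argument against except the claim itself. Read as a research plan your proposal is sensible in outline, and you are right to flag the rigidity step as the decisive obstacle, but the gap is more severe than you acknowledge, and it is not only the functional-equation step. The strong reading of the conjecture (``requires \dots of the form $y(x)=\exp(-(\kappa/x)^\beta)$ up to monotone reparameterisations'') appears to be contradicted from \emph{within} the paper's own framework: Theorem~\ref{thm:pot-unique-eq4}(b) lists strong convexity of the $c_i$ as an independent route to strict concavity of $W$, and Proposition~\ref{prop:min-sigmoid-sufficient} needs only bounded slope/curvature of the composition plus convex pricing. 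Take $\mathrm{Rel}_i$ affine in $p$, $v_i$ concave, and $c_i$ strongly convex: the potential is strongly concave, the pseudo-gradient is strongly monotone and Lipschitz, and every downstream result (unique NE, SVI well-posedness, contraction, tracking) goes through with no sigmoid anywhere. Your contrapositive strategy cannot repair this, because exhibiting \emph{some} non-sigmoid surfaces that fail (your modes (a)--(c)) establishes only that the curvature is not superfluous in those instances; necessity requires showing that \emph{every} surface outside the family fails for \emph{every} admissible choice of $v_i$ and $c_i$, and the counterexample above shows that is false unless the conjecture is read with its escape clause ``or an equivalently strong monotonicity certificate,'' at which point the claim reduces to the near-tautology that strong monotonicity is needed for the strong-monotonicity-based guarantees.

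A second, related problem is the reparameterisation clause itself. If arbitrary monotone reparameterisations of both the abscissa and the ordinate are permitted, then any strictly increasing, single-inflection map of $(0,\infty)$ onto $(0,1)$ (logistic, Gompertz, probit, \dots) can be carried onto a member of the family $\exp(-(\kappa/x)^\beta)$, so the ``normal form'' is not rigid and your proposed invariant $xR'(x)/(R(x)\log(1/R(x)))=\beta$ is not reparameterisation-invariant; if reparameterisations are restricted (say, to affine changes of scale), you would need to say so and justify why the game-theoretic structure singles out that restricted group, which is exactly the ``additional invariance axiom'' you defer. In short: your plan, at best, could deliver the weaker and more defensible statement that \emph{some} uniform curvature/monotonicity certificate is necessary and that single-inflection bounded-curvature geometry combined with convex pricing is one natural way to obtain it --- which is essentially the second sentence of the conjecture --- but it cannot deliver the first sentence as literally stated. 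Since the authors themselves leave this as a conjecture, the honest conclusion is that the statement should either be weakened before any proof is attempted, or the proof must begin by fixing the class of admissible $(v_i,c_i)$ and reparameterisations so that the affine-response counterexample is excluded.
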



\begin{proposition}[Exact potential and KKT alignment]
\label{prop:potential-kkt}
Let $X=\prod_{i=1}^N X_i$ be convex compact and
\[
u_i(p)\;=\;v_i\!\big(\mathrm{Rel}_i(p)\big)-\lambda\,c_i(p_i),
\qquad
\Phi(p)\;=\;\sum_{i=1}^N v_i\!\big(\mathrm{Rel}_i(p)\big)-\lambda\sum_{i=1}^N c_i(p_i)\equiv W(p),
\]
with $v_i$ concave increasing and $c_i$ convex.
Then the game is an \emph{exact potential game} with potential $\Phi$ and every (pure) Nash equilibrium $p^\star$
maximizes $W$ over $X$. If the planner’s problem has convex constraints, replacing $u_i$ by a KKT-aligned
penalty version yields selfish first-order conditions that reproduce the centralized KKT system.
\end{proposition}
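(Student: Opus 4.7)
The plan is to split the proposition into two claims---(i) the shaped game is exact potential with $\Phi\equiv W$ and every pure NE maximizes $W$ over $X$; (ii) replacing $u_i$ by a KKT-aligned penalty version makes selfish first-order conditions reproduce the planner's full KKT system---and to leverage Theorem~\ref{thm:pot-unique-eq4} as the workhorse for (i).

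For (i), I would first establish the exact-potential identity directly from the construction, exactly as in Theorem~\ref{thm:pot-unique-eq4}(a): since $\Phi$ and $u_i$ differ in their dependence on $p_i$ only through the terms $v_i(\mathrm{Rel}_i(\cdot))$ and $-\lambda c_i(p_i)$, the unilateral-deviation identity $\Phi(p_i,p_{-i})-\Phi(q_i,p_{-i})=u_i(p_i,p_{-i})-u_i(q_i,p_{-i})$ holds, equivalently $\nabla_{p_i}\Phi=\nabla_{p_i}u_i$ wherever gradients exist. The NE-implies-maximizer direction then follows because at any pure NE $p^\star$ no unilateral deviation can increase $u_i$, hence $\Phi$; concavity of $W$ on convex compact $X$ upgrades this coordinatewise optimality to a global maximizer via the variational-inequality characterization $\langle\nabla W(p^\star),\,p-p^\star\rangle\le 0$ for all $p\in X$.

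For (ii), I would first make the ``KKT-aligned penalty version'' precise. Write the planner's program as $\max_{p\in X}W(p)$ subject to convex (possibly coupling) $g(p)\le 0$ and affine $h(p)=0$, and assume a Slater-type qualification so that KKT multipliers $(\mu^\star,\nu^\star)$ exist at the planner optimum $p^\star$. Define the augmented private payoffs
\[
\tilde u_i(p)\;=\;u_i(p)\;-\;\langle\mu^\star,\,g(p)\rangle\;-\;\langle\nu^\star,\,h(p)\rangle,
\]
so that the shaped game remains an exact potential game with modified potential $\tilde\Phi=\Phi-\langle\mu^\star,g\rangle-\langle\nu^\star,h\rangle$, which is precisely the planner's Lagrangian $L(\cdot,\mu^\star,\nu^\star)$. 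Each agent's variational stationarity $\langle -\nabla_{p_i}\tilde u_i(p^\star),\,p_i-p_i^\star\rangle\ge 0$ for all $p_i\in X_i$ then stacks into the planner's stationarity $-\nabla_p L(p^\star,\mu^\star,\nu^\star)\in N_X(p^\star)$; primal feasibility, dual feasibility, and complementary slackness are inherited from the planner's problem unchanged. Finally, replacing the fixed $(\mu^\star,\nu^\star)$ by damped dual-update dynamics broadcast as a public index (per Remark~\ref{rem:operational-takeaway}) leaves the alignment intact at the fixed point of the coupled primal-dual loop.

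The main obstacle I expect is keeping the KKT-aligned penalties simultaneously \emph{potential-preserving} and \emph{decentralizable} when constraints couple players: the term $\langle\mu^\star,\,g(p)\rangle$ must enter each shaped utility so that it contributes the same cross-marginal to every player (otherwise the exact-potential identity from step (i) breaks), which in general requires the DSH-style broadcast-price structure rather than a per-player split---unless $g$ is separable as $g(p)=\sum_i g_i(p_i)$, in which case each agent sees only their own share. Once this structural point is handled, the remainder is bookkeeping: match first-order conditions term-by-term and invoke Slater/strong duality to ensure the planner's KKT system is consistent, whence selfish stationarity under $\tilde u_i$ is nothing other than the centralized KKT system read off coordinate by coordinate.
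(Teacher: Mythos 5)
Your proposal is correct and follows essentially the same route as the paper's proof: the exact-potential identity by term cancellation, the upgrade from unilateral (coordinatewise) optimality to global maximization of the concave $W$ via the variational/stationarity characterization on convex $X$, and the KKT part by embedding $\langle\mu^\star,g\rangle$ so the shaped potential becomes the planner's Lagrangian. Your version is somewhat more explicit than the paper's (which leaves the penalty construction and the Slater/decentralizability caveats implicit), but the underlying argument is the same.
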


\begin{proof}
Exact potential property: for any $i$ and any $p_{-i}$, the difference
$\Phi(p_i,p_{-i})-\Phi(q_i,p_{-i})=u_i(p_i,p_{-i})-u_i(q_i,p_{-i})$ by construction, hence
$\nabla_{p_i}\Phi=\nabla_{p_i}u_i$ whenever gradients exist.
Thus a (pure) NE $p^\star$ satisfies $\nabla_{p_i}u_i(p^\star)=0$ with $p^\star\in X$, i.e.,
the KKT stationarity for $\max_{p\in X}\Phi(p)$,
which shows $p^\star$ maximizes $W\equiv\Phi$ over $X$.
Conversely, any maximizer of $\Phi$ must satisfy $\nabla_{p_i}\Phi=0$ in unconstrained
directions, i.e., $\nabla_{p_i}u_i=0$, so no player can improve—hence it is a NE.
For convex inequality constraints $g(p)\le0$, define $U_i$ by embedding penalties/prices so
$\nabla_{p_i}U_i=\nabla_{p_i}\big(W+\mu^\top g\big)$ at the optimum $(p^\star,\mu^\star)$;
then selfish FOCs coincide with the planner’s KKT stationarity.
\end{proof}

\begin{remark}[Constrained case via penalties/prices]
\label{rem:kkt}
With constraint targets (e.g., reliability thresholds), embedding KKT-aligned penalties in $U_i$ reproduces the centralized KKT stationarity in selfish first-order conditions, implementing the constrained optimum without runtime messages beyond a public index.
\end{remark}

\subsection{Bayesian equilibrium as an SVI}\label{sec:bayesian}
We now look to establishing a Bayesian equilibrium. Let private types be $\theta$ and the public index $z$. Define the pseudo-gradient $F(p;\theta)=(-\nabla_{p_i}u_i(p;\theta))_i$. Conditioned on $z$, the Bayesian equilibrium $p^{\star}$ solves
\begin{equation}
\label{eq:VI}
\mathbb{E}\![F(p^{\star};\theta)\mid z]^{\!\top}(q-p^{\star})\ge 0,\qquad \forall q\in X,
\end{equation}
and is \emph{unique} if $F_z(p):=\mathbb{E}[F(p;\theta)\mid z]$ is strongly monotone:
$ (F_z(p)-F_z(q))^{\top}(p-q)\ge \mu\|p-q\|^2$ for some $\mu>0$ \cite{FacchineiPang2003,Scutari2014}.

\begin{theorem}[Bayesian equilibrium as unique SVI solution]
\label{thm:svi-unique}
Let $F(p;\theta)=\big(-\nabla_{p_i}u_i(p;\theta)\big)_{i=1}^N$ and
$F_z(p)=\mathbb{E}[F(p;\theta)\mid z]$.
Assume:
\begin{enumerate}[label=(A\arabic*),leftmargin=1.4em]
\item $X$ is convex compact;\quad
\item $u_i(\cdot\,;\theta)$ is concave in $p_i$ for each $\theta$ and $i$;
\item $\mathrm{Rel}_i$ composed with $y$ in Prop.~\ref{prop:exp-geometry} and convex prices ensure $W$ is \emph{strongly concave}
      with modulus $\mu>0$ on $X$ (equivalently, $-\nabla W$ is $\mu$-strongly monotone);
\item $F_z$ is $L$-Lipschitz on $X$.
\end{enumerate}
Then $F_z$ is $\mu$-strongly monotone on $X$, and the Bayesian Nash equilibrium $p^\star$ is the \emph{unique} solution of
the SVI: $F_z(p^\star)^\top(q-p^\star)\ge 0$ for all $q\in X$.
\end{theorem}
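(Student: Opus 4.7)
The plan is to split the proof into three classical ingredients: derive $\mu$-strong monotonicity of $F_z$ from the exact-potential structure and assumption (A3); invoke Hartman--Stampacchia for existence; and close with the textbook strongly-monotone VI uniqueness argument. The identification of the SVI solution with the Bayesian Nash equilibrium then follows from (A2) and the product structure of $X$. The only delicate piece is transferring strong concavity from the pointwise potential $W(\cdot;\theta)$ to its conditional expectation $W_z$, which I would handle by a standard dominated-integration step.

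First I would establish strong monotonicity of $F_z$. By the exact-potential property of Prop.~\ref{prop:potential-kkt}, $\nabla_{p_i}u_i(\cdot;\theta)=\nabla_{p_i}W(\cdot;\theta)$ componentwise, so $F(p;\theta)=-\nabla W(p;\theta)$. Under compactness of $X$ and the bounded slope/curvature supplied by the sigmoidal response in Prop.~\ref{prop:exp-geometry}, the integrand is uniformly bounded and dominated convergence permits interchange of conditional expectation and gradient, giving $F_z(p)=-\nabla W_z(p)$ with $W_z(p):=\mathbb{E}[W(p;\theta)\mid z]$. Assumption (A3) says $W(\cdot;\theta)$ is $\mu$-strongly concave almost surely; strong concavity is preserved under convex combinations with the same modulus, so $W_z$ is $\mu$-strongly concave and $F_z$ is $\mu$-strongly monotone on $X$.

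Existence and uniqueness then follow from standard VI theory. Assumption (A4) gives continuity of $F_z$, which combined with (A1) yields a solution $p^\star$ to~(\ref{eq:VI}) via Hartman--Stampacchia (or equivalently as a Brouwer fixed point of the projected map $p\mapsto \Pi_X\bigl(p-\eta F_z(p)\bigr)$ for small $\eta>0$). If $p^\star$ and $\tilde p$ both solve the SVI, testing each inequality with the other point and adding yields $(F_z(p^\star)-F_z(\tilde p))^\top(p^\star-\tilde p)\le 0$, contradicting the strong-monotonicity lower bound $\mu\|p^\star-\tilde p\|^2$ unless $p^\star=\tilde p$. Finally, the Bayesian-NE characterization is recovered because (A2) makes each agent's conditional-on-$z$ expected utility concave in $p_i$, so the individual first-order VI $\mathbb{E}[-\nabla_{p_i}u_i(p^\star;\theta)\mid z]^\top(q_i-p_i^\star)\ge 0$ on $X_i$ is both necessary and sufficient for best response; stacking across $i$ and using $X=\prod_i X_i$ reproduces exactly~(\ref{eq:VI}). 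The main technical obstacle I anticipate is the clean justification of the gradient/expectation swap together with the preservation of the modulus $\mu$ under conditional averaging; everything else is routine strongly-monotone VI machinery.
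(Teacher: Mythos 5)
Your proposal is correct and follows essentially the same route as the paper's proof: exact potential gives $F_z=-\nabla\Phi_z$, assumption (A3) transfers strong concavity to the conditional potential so $F_z$ is $\mu$-strongly monotone, and standard VI theory on the convex compact $X$ plus per-player first-order optimality yields existence, uniqueness, and the Bayesian-NE identification. Your version is somewhat more careful than the paper's — you explicitly justify the gradient/expectation interchange and the preservation of the modulus under conditional averaging, and you spell out the two-inequality uniqueness argument that the paper delegates to \cite{FacchineiPang2003} — but these are refinements of the same argument, not a different approach.
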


\begin{proof}
From exact potential, $F_z(p)=-\nabla\Phi_z(p)$ where $\Phi_z(p)=\mathbb{E}[\Phi(p;\theta)\mid z]$.
Assumption (A3) means $\Phi_z$ is strongly concave with modulus $\mu$, hence $-\nabla\Phi_z$ is
$\mu$-strongly monotone:
$(F_z(p)-F_z(q))^\top(p-q)\ge \mu\|p-q\|^2$.
Standard VI theory then gives existence and uniqueness of the SVI solution on a convex compact set for a continuous, strongly
monotone map. The SVI solution is the (unique) Bayesian NE by first-order optimality.
\end{proof}

\begin{lemma}[Lipschitz bound for the pseudo-gradient]
\label{lem:lipschitz}
Under (A1)–(A2) and bounded slope/curvature of $y$ on the signal range (Prop.~\ref{prop:exp-geometry}.iii),
the mapping $p\mapsto F(p;\theta)$ is $L(\theta)$-Lipschitz on $X$; taking conditional expectation preserves Lipschitzness,
so $\|F_z(p)-F_z(q)\|\le L\|p-q\|$ for some $L<\infty$.
\end{lemma}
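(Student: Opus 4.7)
The plan is to obtain the Lipschitz bound by differentiating the shaped utility via the chain rule, bounding each resulting factor uniformly on the compact set $X$, and then passing the deterministic bound through the conditional expectation by Jensen/Minkowski. Writing $\mathrm{Rel}_i = y \circ s_i$ with $s_i$ the (smooth) SINR/effective-signal map, each block component of $F(\cdot;\theta)$ reads
\[
-\nabla_{p_i} u_i(p;\theta) \;=\; -\,v_i'\!\big(\mathrm{Rel}_i(p)\big)\,y'\!\big(s_i(p)\big)\,\nabla_{p_i} s_i(p)\;+\;\lambda\,\nabla_{p_i} c_i(p_i),
\]
so every term is a product of scalar-valued functions of $p$ with a smooth vector field on the compact domain $X$.

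First I would collect uniform bounds for each of these factors. By (A2), $v_i$ is concave and smooth on the compact range $\mathrm{Rel}_i(X)\subset(0,1)$, so $v_i'$ and $v_i''$ are bounded there; by Proposition~\ref{prop:exp-geometry} (single inflection, bounded slope/curvature on the operating range), $y'$ and $y''$ are bounded on the compact image $s_i(X)$; under the standard interference model, $s_i\in C^2(X)$ and $\nabla s_i$ is bounded and Lipschitz; finally the convex price $c_i$ is taken $C^2$ on compact $X_i$, so $c_i''$ is bounded. Applying the product/chain rule and triangle inequality to the displayed expression then yields that each component $-\nabla_{p_i} u_i(\cdot;\theta)$ is Lipschitz with some constant $L_i(\theta)$; stacking and using Cauchy--Schwarz gives a joint modulus $L(\theta)=(\sum_i L_i(\theta)^2)^{1/2}<\infty$ for the operator $F(\cdot;\theta)$.

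Second, to transfer the bound to $F_z=\mathbb{E}[F(\cdot;\theta)\mid z]$, I would use the conditional Jensen/Minkowski inequality:
\[
\|F_z(p)-F_z(q)\| \;=\; \big\|\mathbb{E}[F(p;\theta)-F(q;\theta)\mid z]\big\|
\;\le\; \mathbb{E}\big[\|F(p;\theta)-F(q;\theta)\|\,\big|\,z\big]
\;\le\; \mathbb{E}[L(\theta)\mid z]\,\|p-q\|,
\]
and set $L:=\operatorname*{ess\,sup}_{z}\mathbb{E}[L(\theta)\mid z]$. Under the paper's standing compactness/boundedness regime on the type space (so that the factor bounds from the previous step depend on $\theta$ only through bounded quantities), $L(\theta)$ is essentially bounded in $\theta$, giving $L<\infty$.

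The main obstacle is the last step: $L(\theta)$ is uniformly bounded in $\theta$ only if the curvature/slope bounds on $v_i$, $y$, $s_i$, and $c_i$ can be chosen independently of $\theta$. This is the only place one really needs a ``regular family'' assumption on the types---either a compact $\Theta$ with continuous dependence of $(v_i,c_i,s_i)$ on $\theta$, or an $L^1$-integrability hypothesis on the component second derivatives against the type distribution. Once either is in force, the bound passes through the conditional expectation without loss and the lemma follows.
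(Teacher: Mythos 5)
Your proposal is correct and follows essentially the same route as the paper's proof: chain-rule differentiation of $v_i\circ y\circ s_i$ and $c_i$, uniform bounds on $|y'|,|y''|$ from Proposition~\ref{prop:exp-geometry} together with compactness of $X$, and then passing the Lipschitz modulus through the conditional expectation. Your explicit Jensen/Minkowski step and the remark that $L(\theta)$ must be (essentially) bounded or integrable in $\theta$ make precise a point the paper's proof states only as ``conditional expectation preserves Lipschitzness,'' so your version is, if anything, slightly more careful.
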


\begin{proof}
See Appendix~\ref{app:proofs} (Lem.~\ref{lem:lipschitz-proof}).
\end{proof}

\subsection{Repeated play and tracking under imperfect monitoring}
In practise, the games are not single-shot games, so we are interested in repeated play, and players are generally unaware of other's coincident actions, hence there is imperfect monitoring, moreover, the overall system is subject to noise and drift. We then establish unique Nash equilibrium in mixed strategies; with generic tie-breaking and hysteresis, play that converges to a unique pure profile almost certainly. We next establish optimality. Thus, under a contractive update map $T$ (best response with hysteresis or damped gradient), the tracking error with noisy gradients and slow drift obeys,
\begin{equation}
\label{tracking_error}
\|p_t-p_t^{\star}\|\le \alpha\,\|p_{t-1}-p_{t-1}^{\star}\|+\beta(\mathrm{drift}_t+\mathrm{noise}_t),\quad 0<\alpha<1,
\end{equation}
yielding steady-state error $O((\mathrm{drift}+\mathrm{noise})/(1-\alpha))$.

\begin{algorithm}[t]
\caption{Decentralized updates with public index}
\label{alg:updates}
\begin{algorithmic}[1]
\State \textbf{Input:} step size $\eta$, damping $\rho\in(0,1)$, hysteresis $h>0$.
\For{each agent $i$ at time $t$}
 \State Observe local signal $s_{i,t}$ and public index $z_t$; form $g_{i,t}\approx \nabla_{p_i}u_i(p_t;\theta_{i,t},z_t)$.
 \State \textbf{Damped gradient: } $p_{i,t+1}\leftarrow \Pi_{X_i}[(1-\rho)p_{i,t}+\rho(p_{i,t}+\eta g_{i,t})]$.
 \State \textbf{Best response + hysteresis: } $\hat p_{i,t+1}\in\arg\max_{x\in X_i} u_i(x,p_{-i,t})$; update if $\|\hat p_{i,t+1}-p_{i,t}\|>h$.
\EndFor
\end{algorithmic}
\end{algorithm}

\begin{theorem}[Tracking under noise and drift]
\label{thm:tracking}
Let $T$ denote one synchronous iteration of either (i) damped projected gradient
$p^+=\Pi_X\!\big(p-\eta F_z(p)\big)$ with damping $\rho\in(0,1)$ applied as $p\leftarrow(1-\rho)p+\rho p^+$,
or (ii) best response with hysteresis band $h>0$ (agents update only if $\|\hat p_i-p_i\|>h$).
Assume $F_z$ is $\mu$-strongly monotone and $L$-Lipschitz (Thm.~\ref{thm:svi-unique}, Lem.~\ref{lem:lipschitz}).
Suppose we observe a noisy estimate $\tilde F_z(p)=F_z(p)+\xi$ with $\mathbb{E}[\xi\mid p]=0$ and $\mathbb{E}\|\xi\|\le\sigma$,
and the SVI solution drifts by $\Delta_t=\|p_t^\star-p_{t-1}^\star\|$.
If $0<\eta<2\mu/L^2$ and $\rho\in(0,1)$, then there exist $\alpha\in(0,1)$ and $\beta>0$ such that
\[
\mathbb{E}\big[\|p_t-p_t^\star\|\big]\ \le\ \alpha\,\mathbb{E}\big[\|p_{t-1}-p_{t-1}^\star\|\big]\;+\;\beta\,(\Delta_t+\sigma).
\]
Consequently, with bounded drift/noise, $\limsup_t \mathbb{E}\|p_t-p_t^\star\|=O\!\big((\mathrm{drift}+\sigma)/(1-\alpha)\big)$.
\end{theorem}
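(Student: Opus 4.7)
The plan is to derive a one-step recursion by inserting the previous equilibrium $p_{t-1}^\star$ as a reference point and splitting the error into contraction, noise, and drift components. Let $T$ denote the noise-free update operator (so $T(p_{t-1}^\star)=p_{t-1}^\star$) and $\tilde T$ its noisy realization. The triangle inequality then gives
\begin{equation*}
\|p_t-p_t^\star\|\;\le\;\|\tilde T(p_{t-1})-T(p_{t-1})\|\;+\;\|T(p_{t-1})-T(p_{t-1}^\star)\|\;+\;\|p_{t-1}^\star-p_t^\star\|,
\end{equation*}
and I would bound the three terms respectively by a noise term (linear in $\|\xi_t\|$), a contraction factor times $\|p_{t-1}-p_{t-1}^\star\|$, and $\Delta_t$; taking conditional expectation and using $\mathbb{E}[\xi_t\mid p_{t-1}]=0$ with $\mathbb{E}\|\xi_t\|\le\sigma$ yields the stated recursion.

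For case (i), the workhorse is the classical estimate for $G(p)=\Pi_X(p-\eta F_z(p))$: expanding $\|G(p)-G(q)\|^2$, using nonexpansiveness of $\Pi_X$, $\mu$-strong monotonicity (from Theorem~\ref{thm:svi-unique}) and $L$-Lipschitzness (from Lemma~\ref{lem:lipschitz}) yields contraction factor $\sqrt{1-2\eta\mu+\eta^2 L^2}$, which is strictly below one precisely when $\eta\in(0,2\mu/L^2)$. The relaxation $p\leftarrow(1-\rho)p+\rho G(p)$ is a convex combination with the identity, so it contracts with factor $\alpha_{\mathrm{i}}=(1-\rho)+\rho\sqrt{1-2\eta\mu+\eta^2 L^2}\in(0,1)$; the noise enters through $\rho\eta\xi_t$, giving the stated inequality with $\beta_{\mathrm{i}}=\max(1,\rho\eta)$.

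Case (ii) requires a different argument because the hysteresis-thresholded best-response map is discontinuous. I would first note that the damped best-response map inherits contractivity from strong concavity of the potential $\Phi_z$ (Prop.~\ref{prop:min-sigmoid-sufficient}), via the curvature-derived $L/\mu$ ratio. Hysteresis with band $h$ suppresses updates whose magnitude lies below $h$, so each coordinate's one-step error is at most the damped-BR error plus an additive dead-zone term of order $h$, which is absorbed into $\beta$. Taking $\alpha=\max(\alpha_{\mathrm{i}},\alpha_{\mathrm{ii}})$ and $\beta$ accordingly yields the unified recursion; iterating it as a discrete geometric series then produces the $\limsup$ bound $O((\mathrm{drift}+\sigma)/(1-\alpha))$.

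The hard part will be the hysteresis analysis: since the thresholded update is not Lipschitz, standard Banach-style contraction does not apply directly. I would handle this with a small-gain / dead-zone envelope argument, showing that the $h$-thresholded map is sandwiched between two contractive maps and contributes only an additive $O(h)$ error that can be rolled into $\beta(\Delta_t+\sigma)$ by rescaling. A secondary subtlety is the measurability bookkeeping for the conditional expectation through the projection, ensuring the zero-mean noise hypothesis yields the clean $\sigma$ bound on $\mathbb{E}\|\xi_t\|$ rather than forcing a second-moment expansion and a worse dependence on $L$.
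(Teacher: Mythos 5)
Your proposal follows essentially the same route as the paper's proof: the same triangle-inequality decomposition around the fixed point $T(p_{t-1}^\star)=p_{t-1}^\star$, the same expansion of $\|G_\eta(p)-G_\eta(q)\|^2$ giving modulus $\sqrt{1-2\eta\mu+\eta^2L^2}$ for $\eta\in(0,2\mu/L^2)$, the same damped modulus $\alpha=(1-\rho)+\rho\sqrt{1-2\eta\mu+\eta^2L^2}$, and the same constant $\beta=\max\{1,\rho\eta\}$; the paper likewise disposes of the best-response-with-hysteresis case by asserting it is a strict pseudo-contraction ``up to constants.'' One caveat on your hysteresis treatment: the dead-zone contributes an additive bias of order $h$ that does \emph{not} vanish when $\Delta_t=\sigma=0$, so it cannot be ``rolled into $\beta(\Delta_t+\sigma)$ by rescaling''; it should appear as a separate additive term (or $h$ should be tied to $\sigma$), a gap the paper's own sketch also glosses over.
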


\begin{proof}[Proof sketch]
For strongly monotone, $L$-Lipschitz $F_z$, the map $G_\eta(p)=p-\eta F_z(p)$ is a contraction
with modulus $q=\sqrt{1-2\eta\mu+\eta^2L^2}<1$ when $0<\eta<2\mu/L^2$; projection is nonexpansive,
and damping yields overall modulus $\alpha=(1-\rho)+\rho q<1$. One-step perturbation analysis with
noisy gradients and drifting fixed points gives the recursion in the statement. Full details,
including the best-response+hysteresis case, are in Appendix~\ref{app:proofs} (Thm.~\ref{thm:tracking-proof}).
\end{proof}

\begin{corollary}[No-violation regime under dual damping]
\label{cor:violations}
For capacity/constraint indices updated as $z_{t+1}=[z_t+\eta_z(\textstyle\sum_i g_i(p_{i,t})-C)]_+$ with sufficiently small $\eta_z$ (dual damping), the coupled primal–dual map remains a contraction and steady-state violations vanish when the planner’s problem is feasible.
\end{corollary}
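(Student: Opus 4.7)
The plan is to treat the scheme as a coupled primal--dual system and argue contraction on a joint metric. For any fixed dual value $z$, Theorem~\ref{thm:svi-unique} delivers a unique SVI solution $p^\star(z)$, and Theorem~\ref{thm:tracking} says the primal inner iteration contracts towards $p^\star(z_t)$ with factor $\alpha\in(0,1)$. The dual update $z_{t+1}=[z_t+\eta_z(\sum_i g_i(p_{i,t})-C)]_+$ is a projected (sub)gradient ascent on the Lagrangian dual, which under the assumed feasibility (Slater) admits a unique maximizer $z^\star\ge0$; by Proposition~\ref{prop:potential-kkt}, its associated primal $p^\star(z^\star)$ is the constrained social optimum.

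First I would bound the sensitivity of $p^\star(z)$ to $z$. Because the shaped operator depends on $z$ only through a linear multiplier term, $F_z$ is Lipschitz in $z$ with modulus $L_z$; combined with $\mu$-strong monotonicity in $p$, a standard monotone-VI perturbation argument yields $\|p^\star(z)-p^\star(z')\|\le (L_z/\mu)\,\|z-z'\|$. Next I would analyze the reduced dual map $z\mapsto \bigl[z+\eta_z(G(p^\star(z))-C)\bigr]_+$, where $G(p)=\sum_i g_i(p_i)$. This is a projected gradient step on the concave dual $D(z)=\max_p\Phi(p)+z^\top(C-G(p))$, whose gradient is Lipschitz with modulus $L_D$ controlled by $L_z$ and $\mu$; for $\eta_z<2/L_D$ the step is a contraction in $z$ with factor $\alpha_z<1$.

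Then I would combine the two using the composite Lyapunov function $V_t=\mathbb{E}\|p_t-p^\star(z_t)\|+\gamma\,\|z_t-z^\star\|$. Substituting the primal recursion from Theorem~\ref{thm:tracking}, now with drift $\Delta_t$ given by the sensitivity bound $\|p^\star(z_t)-p^\star(z_{t-1})\|\le (L_z/\mu)\|z_t-z_{t-1}\|$, and the dual contraction above, I would pick $\gamma$ small enough to dominate the cross-term $\beta(L_z/\mu)\|z_t-z_{t-1}\|$ while keeping $\eta_z$ small enough that the dual coefficient stays strictly below one. This yields $V_{t+1}\le\tilde\alpha\,V_t+\beta\sigma$ with $\tilde\alpha\in(0,1)$, so the coupled primal--dual map is a contraction up to the gradient noise $\sigma$, and Banach gives a unique fixed point $(p^\star,z^\star)$.

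Finally, any fixed point satisfies $z^\star=[z^\star+\eta_z(G(p^\star)-C)]_+$, which forces $z^\star\ge0$, $G(p^\star)-C\le0$, and $z^{\star\top}(G(p^\star)-C)=0$, i.e.\ primal feasibility and complementary slackness; hence steady-state violations vanish when the planner's problem is feasible (and the residual is $O(\sigma/(1-\tilde\alpha))$ under bounded noise). The hard part will be calibrating $\gamma$ and $\eta_z$ consistently so that $\tilde\alpha<1$ without collapsing either time scale: one must keep the inner primal loop fast relative to the dual step so that the drift $\Delta_t$ induced by $z_t$-movement is dominated by the primal contraction, which is precisely the two-time-scale content of ``dual damping'' invoked in the corollary.
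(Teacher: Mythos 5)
Your proposal follows the same skeleton as the paper's proof: the decisive step in both is the fixed-point identity $z^\star=[z^\star+\eta_z(\sum_i g_i(p^\star)-C)]_+$, which forces primal feasibility and complementary slackness, and hence no steady-state violations once convergence is established. Where you differ is in the convergence part: the paper simply \emph{asserts} that "with sufficiently small $\eta_z$ the coupled primal--dual map remains a contraction," whereas you actually build the argument --- the VI sensitivity bound $\|p^\star(z)-p^\star(z')\|\le (L_z/\mu)\|z-z'\|$, the reduced dual map, and a weighted Lyapunov function $V_t$ whose cross-terms are absorbed by choosing $\gamma$ and $\eta_z$ on separated time scales. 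That is a genuine improvement in rigor over the published sketch. One soft spot you should repair: you claim the reduced dual step is a contraction merely because $\nabla D$ is $L_D$-Lipschitz and $\eta_z<2/L_D$; Lipschitz gradient alone makes a projected gradient step nonexpansive, not contractive. You need (local) strong convexity of the dual --- equivalently a positive lower bound on $\bigl|\tfrac{d}{dz}G(p^\star(z))\bigr|$, which follows from strong concavity of $W$ together with $\nabla G(p^\star)\neq 0$ at an active constraint --- to get $\alpha_z<1$; otherwise the coupled factor $\tilde\alpha$ may equal one and Banach does not apply. (Also, the dual $D(z)=\max_p\{\Phi(p)+z^\top(C-G(p))\}$ is convex in $z$ and is minimized, not a "concave dual"; your update has the right sign regardless.) With that strong-convexity hypothesis made explicit, your argument is a complete and strictly more detailed proof of what the corollary claims.
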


\begin{proposition}[Discrete actions via strict concavity in mixtures]
\label{prop:discrete}
Let each $X_i$ be finite, and let expected utilities $\bar U_i(\pi_i,\pi_{-i})$ be defined over mixed strategies
$\pi_i\in\Delta(X_i)$ by taking expectations of a strictly concave\footnote{E.g., $-\!C p_i^{\,w}+\log(1+\mathrm{Rel}_i^v)$ in $p_i$ with $w>1, v>0$, then extended linearly over $\pi_i$.} single-agent payoff in $p_i$.
If the game is diagonally strictly concave (negative definite symmetric part of the pseudo-Jacobian) on
$\prod_i \Delta(X_i)$, then there is a unique Nash equilibrium in mixed strategies; with generic tie-breaking and hysteresis,
play converges to a unique pure profile almost surely.
\end{proposition}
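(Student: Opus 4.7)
The plan is to decompose the claim into three stages: existence of a mixed-strategy Nash equilibrium, its uniqueness, and almost-sure convergence to a pure profile under best-response-with-hysteresis dynamics. For existence I would invoke Glicksberg's extension of Nash's theorem: each $\Delta(X_i)$ is a convex compact simplex and $\bar U_i(\pi_i,\pi_{-i})$ is continuous (in fact linear) in $\pi_i$, so a mixed-strategy NE exists on $\prod_i\Delta(X_i)$. For uniqueness I would apply Rosen's diagonally-strict-concavity theorem directly: the hypothesis that the symmetric part of the pseudo-Jacobian of $F(\pi)=(-\nabla_{\pi_i}\bar U_i(\pi))_i$ is negative definite is equivalent to strong monotonicity of $F$, which makes the variational inequality $F(\pi^\star)^\top(\pi-\pi^\star)\ge 0$ admit at most one solution on the convex compact product of simplices.

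Next I would argue that the unique NE is in fact pure. Because $\bar U_i$ is linear in $\pi_i$, best responses are attained at vertices of $\Delta(X_i)$, i.e., on pure actions. Strict concavity of the underlying per-action payoff $U_i(\cdot,p_{-i})$ on the enclosing interval, restricted to the finite ordered set $X_i$, implies that for \emph{generic} $\pi_{-i}$ the vertex maximizer of $\sum_{p_i}\pi_i(p_i)\mathbb{E}_{\pi_{-i}}U_i(p_i,p_{-i})$ is unique, because ties between two pure actions correspond to a codimension-one semi-algebraic subset of $\prod_j\Delta(X_j)$. Since the unique NE $\pi^\star$ can only mix among indifferent pure actions, and generic tie-breaking excludes this measure-zero slice, $\pi^\star$ puts full mass on a single pure profile $p^\star\in X$.

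For convergence, I would combine the contraction guarantee of Theorem~\ref{thm:tracking} (applied to the damped projected-gradient map on the mixed extension, using the strong monotonicity of $F$ just established) with the hysteresis band $h>0$: once trajectories enter a neighborhood of $\pi^\star$ of radius smaller than $h$ in a suitable norm, the ``update only if $\|\hat p_i-p_i\|>h$'' rule locks play onto the pure vertex nearest $\pi^\star$. The main obstacle, and where I would spend the most care, is bridging the \emph{discrete} realized actions with the \emph{continuous} contraction inequality that lives on $\prod_i\Delta(X_i)$; I would resolve it by using continuity of $\bar U_i$ and the strict ordering of vertex payoffs in a small neighborhood of $\pi^\star$ to show that every agent's pure best response is constant on that neighborhood, yielding finite-time lock-in. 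The ``almost surely'' qualifier then absorbs the measure-zero tie events through the generic tie-breaking rule, completing the argument.
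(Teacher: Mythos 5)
Your overall decomposition matches the paper's route: uniqueness of the mixed equilibrium via Rosen-type diagonal strict concavity, i.e., strong monotonicity of the pseudo-gradient and hence a unique solution of the variational inequality on the convex compact product of simplices, followed by tie-breaking and hysteresis to land on a pure profile. The existence step (Glicksberg/Nash) and the lock-in argument for the dynamics are sensible elaborations of what the paper leaves implicit.

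There is, however, a genuine gap in your purity step. You argue that because ties between two pure actions form a codimension-one subset of $\prod_j\Delta(X_j)$, and generic tie-breaking excludes this measure-zero slice, the unique equilibrium $\pi^\star$ must put full mass on a single pure profile. This conflates genericity over the strategy space with the location of the equilibrium: $\pi^\star$ is a single point determined by the game's payoffs, and a mixed equilibrium of a finite game sits \emph{exactly} on the indifference set of every player who mixes -- that is what makes mixing a best response. No tie-breaking rule in the dynamics can move the equilibrium off that set; if the unique VI solution is genuinely interior in some coordinate, it stays interior. What tie-breaking and hysteresis actually buy -- and what the proposition (and the paper's proof) claim -- is that the \emph{realized play} settles almost surely on one pure profile rather than cycling among payoff-equivalent actions; the selected pure profile is a limit of the dynamics, not a demonstration that $\pi^\star$ itself is a vertex. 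To repair the step you would need either a genericity assumption on the \emph{payoffs} (so that no player is indifferent at the equilibrium), or an appeal to the quantization result of Proposition~\ref{prop:discrete-robust-eq5}, under which the discrete equilibrium is the nearest-neighbor projection of a strictly concave continuous optimum and is therefore pure by construction. Separately, note a tension you inherit from the statement but then lean on: you use linearity of $\bar U_i$ in $\pi_i$ to place best responses at vertices, yet a multilinear mixed extension has $\nabla^2_{\pi_i\pi_i}\bar U_i=0$, so the symmetric part of the pseudo-Jacobian has zero diagonal blocks and cannot be negative definite on $\prod_i\Delta(X_i)$; the two hypotheses you invoke cannot hold simultaneously in the form you use them.
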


\begin{proof}
Strict concavity in each player’s own (mixed) variable and diagonal strict concavity imply uniqueness of the variational
inequality solution for the mixed-strategy game on a convex compact set (product of simplices).
Generic tie-breaking and hysteresis eliminate cycling among payoff-equivalent pure actions, so the unique mixed equilibrium selects a unique pure profile almost surely.
\end{proof}

\begin{lemma}[Two-layer composition under a shared potential]
\label{lem:two-layer}
Let a slow discrete layer choose $a\in\mathcal{A}$ to improve a discrete-convex score $A(a)$ and a fast continuous
layer pick $p\in X$ to maximize a strongly concave potential $W_a(p)$, with joint welfare
$\mathcal{V}(a,p)=A(a)+W_a(p)$. If the algorithm alternates (i) local improvement in $a$ and (ii) global maximization
in $p$ (unique by strong concavity), then every limit point is a block-coordinate optimum of $\mathcal{V}$; if $A$ is M-convex/submodular with a matroid constraint, greedy/local-exchange steps converge to a globally optimal $a^\star$, hence $(a^\star,p^\star)$ with $p^\star=\arg\max_p W_{a^\star}(p)$ is globally optimal.
\end{lemma}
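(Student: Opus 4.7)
The plan is to prove the lemma in two stages matching its two assertions. First I would show that every limit point of the alternation is a block-coordinate optimum of $\mathcal{V}(a,p)=A(a)+W_a(p)$ via a monotone-ascent argument anchored in the strong concavity of the continuous block. Second, I would upgrade discrete \emph{local} optimality to \emph{global} optimality using the M-convex/submodular-on-matroid structure, and then combine this with uniqueness of the continuous best response to conclude joint global optimality.

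For the first stage, define $p^\star(a):=\arg\max_{p\in X}W_a(p)$, which is single-valued by strong concavity of $W_a$, and the reduced value $\widetilde{\mathcal{V}}(a):=\mathcal{V}(a,p^\star(a))$. Each outer round performs (i) a local improvement in $a$ (which weakly increases $\mathcal{V}(\cdot,p_t)$) followed by (ii) re-optimization of $p$ (which weakly increases $\mathcal{V}(a_{t+1},\cdot)$). Thus $\widetilde{\mathcal{V}}(a_t)$ is monotone nondecreasing. Boundedness of $A$ on $\mathcal{A}$ together with continuity of $W_a$ on the compact $X$ gives an upper bound, so $\widetilde{\mathcal{V}}(a_t)$ converges. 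At any accumulation point $(a^\infty,p^\infty)$, no feasible discrete local move increases $\mathcal{V}$, and $p^\infty=p^\star(a^\infty)$ by continuity of $\arg\max$ (uniqueness plus Berge's maximum theorem). Both conditions together constitute block-coordinate optimality of $\mathcal{V}$.

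For the second stage, I would invoke the classical discrete-convex-analysis fact that for M-convex functions on $\mathcal{A}$ (Murota) and for submodular optimization over a matroid (Edmonds, Fujishige), local optimality with respect to the canonical exchange neighborhood \emph{coincides} with global optimality. Applied to the composite objective $\widetilde{\mathcal{V}}$ this gives a globally optimal $a^\star$; the unique best response $p^\star:=p^\star(a^\star)$ from strong concavity then yields $(a^\star,p^\star)\in\arg\max_{(a,p)}\mathcal{V}(a,p)$, because any better pair $(a',p')$ would satisfy $\mathcal{V}(a',p')\le \widetilde{\mathcal{V}}(a')\le \widetilde{\mathcal{V}}(a^\star)=\mathcal{V}(a^\star,p^\star)$.

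The main obstacle I anticipate is the \emph{inheritance} step: the discrete primitive is M-convexity/submodularity of $A$, while the argument above needs the same structure for the reduced objective $\widetilde{\mathcal{V}}(a)=A(a)+\max_{p}W_a(p)$. This is automatic if $W_a$ is independent of $a$ (the continuous layer decouples), or if $\max_p W_a(p)$ is itself M-convex/submodular in $a$ and the class is closed under addition, but it may fail in general. I would therefore state the inheritance explicitly as a structural hypothesis, and note sufficient conditions drawn from discrete convex analysis (e.g., separable coupling of $W_a$ along matroid elements, or concave closure arguments that preserve submodularity under addition). Under such a condition the conclusion follows; absent it, greedy/local-exchange still yields stationarity on $A$ and only the block-coordinate claim of the first stage survives.
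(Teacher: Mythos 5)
Your two-stage plan is essentially the route the paper takes: block-coordinate ascent with the unique continuous argmax gives a nondecreasing, bounded sequence $\mathcal{V}(a_t,p_t)$ and hence block-stationary limit points, and the discrete-convexity hypothesis is then invoked to upgrade the discrete layer to a global maximizer. Your first stage is a more careful rendering of the paper's one-line argument (the appeal to Berge's maximum theorem to identify $p^\infty=p^\star(a^\infty)$ at accumulation points is a detail the paper omits but should have).

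The ``inheritance'' obstacle you flag in your last paragraph is a genuine gap, and it is present in the paper's own proof as well. The hypothesis is that $A$ is M-convex/submodular, but the local-exchange/greedy argument must be run on the reduced objective $\widetilde{\mathcal{V}}(a)=A(a)+\max_{p}W_a(p)$, and the value function $a\mapsto\max_p W_a(p)$ has no reason to be M-concave or submodular in general; if instead the greedy steps optimize $A$ alone, the resulting $a^\star$ need not maximize $\mathcal{V}(\cdot,p^\star(\cdot))$ unless $W_a$ is independent of $a$. Your proposal to state the inheritance as an explicit structural hypothesis (or to retreat to the block-coordinate claim otherwise) is the correct fix. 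One further caution that applies to both your write-up and the paper's: exact local-to-global equivalence under exchange neighborhoods is a theorem for M-concave maximization (Murota), but \emph{submodular maximization over a matroid} is NP-hard in general and local search there yields only approximation guarantees; the ``M-convex/submodular'' pairing should not be read as interchangeable. With the inheritance hypothesis added and the discrete class restricted to one where local optimality implies global optimality, your final chain $\mathcal{V}(a',p')\le\widetilde{\mathcal{V}}(a')\le\widetilde{\mathcal{V}}(a^\star)=\mathcal{V}(a^\star,p^\star)$ closes the argument correctly.
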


\begin{proof}
Block-coordinate ascent on a bounded objective with unique $p$-updates yields a nondecreasing sequence
$\mathcal{V}(a_t,p_t)$ and hence convergence to a block-stationary point. Under discrete convexity (M-convexity/submodularity) and standard exchange/greedy rules, the discrete layer finds a global maximizer; uniqueness of the continuous argmax delivers the claim.
\end{proof}

\section{Computational experiments}
\label{sec:experiments}
This section provides two separate cases for computational experiments relevant to the message free blueprint. The first is for multi-tier supply chains, a three-tier chain, and the second is for Agentic AI bidding bots. Appropriate analysis is applied to both. In all experiments we score welfare using the centralized planner objective \(W(a)=\sum_i v_i(\mathrm{Rel}_i(x_i(a)))-\lambda\sum_i c_i(a_i)\) with \(\mathrm{Rel}_i(x)=\exp(-(\kappa/x)^{\beta})\); transfers (e.g., posted prices) are excluded from \(W\) because they merely redistribute surplus.

\subsection{Multi-tier supply chains}
\paragraph*{Methods snapshot}
\begin{quote}\small
\textbf{Setting.} A repeated game with players $i\in\{1,\dots,N\}$ choosing $a_i\in\mathcal{A}_i$; types $\theta_i$ are private; a public signal/index $s$ is observed each period.\\
\textbf{Utility shaping.} Each player maximizes a shaped payoff $\tilde u_i(a_i,a_{-i},\theta_i,s)
= u_i(a_i,a_{-i},\theta_i) - \langle \lambda^\star,\, g(a)\rangle - r_i(a_i)$ where prices/penalties align with the planner’s KKT conditions.\\
\textbf{Dynamics.} Agents run damped best-response / projected gradient:
$a^{t+1} = \Pi_{\mathcal{A}}\!\big[a^{t} - \eta \,\widehat{F}(a^{t},s^{t})\big]$
with $\widehat{F}$ an unbiased (or bounded-bias) stochastic estimator of the pseudo-gradient and $\eta$ tuned for contraction.\\
\textbf{Certificates.} Under empirically grounded sigmoid curvature and convex penalties: (i) the pseudo-gradient is \emph{strongly monotone} that implies unique (Bayesian) NE; (ii) contraction yields linear convergence and explicit noise–drift tracking; (iii) discrete assignments compose with continuous controls under a shared potential, preserving well-posedness.\\
\textbf{Overall} There is \emph{One public index + KKT-aligned penalties + contraction-safe steps} that implies message-free alignment of selfish behavior with the planner optimum.
\end{quote}

\paragraph{Setup.} Consider a three-tier chain (suppliers $S$, processors $P$, retailers $R$) with agents $i\in S\cup P\cup R$ and operating non-cooperatively, with private objectives. Decision $p_i$ is production/throughput; costs $c_i(p_i)=a_i p_i + b_i p_i^2$; benefits are saturating in fulfillment probability. A public index $z$ summarizes upstream scarcity (e.g., water/inputs) and downstream congestion (e.g., logistics). We use the shaped utility in \Eqref{eq:utility2} with a reliability proxy $\mathrm{Rel}_i=\exp(-(\kappa/x_i)^{\beta})$ where $x_i$ is an effective “signal” (supply adequacy or service level). Parameters $(\kappa,\beta)$ are chosen to produce single-inflection sigmoids consistent with Figure~\ref{fig:pdr-OR}.

\paragraph{Baselines.}
We apply the following baselines, according to seminal work, for comparison for our three-tier chain example, with our utility shaping methodology:
\begin{enumerate}[leftmargin=1.4em,label=\textbf{\arabic*}]
\item \emph{\textbf{Price-only (decentralized).}} Agents \emph{do not} use the compressed/stretched-exponential saturation $y(\cdot)=\exp\!\big(-(\kappa/\cdot)^\beta\big)$ in their utilities. Instead, each agent $i$ optimizes a classic pricing/proportional-fairness style utility based on a linear/affine throughput proxy $x_i(p)$, e.g.,
    \begin{align}
u_i^{\mathrm{price}}(p)\;&=\;\log\!\bigl(1+x_i(p)\bigr)\;-\;\lambda\bigl(a_i p_i + b_i p_i^{2}\bigr),\qquad\nonumber\\
x_i(p)\;&=\;\frac{p_i}{1+\sum_{j} A_{ij} p_j},\nonumber
    \end{align}
with the same feasible sets and local update rules (Best Response with hysteresis or damped gradient) as our method. This baseline mirrors the network-utility/pricing paradigm of proportional fairness and related analyses \cite{Kelly1998,LowLapsley1999,JohariTsitsiklis2004} and isolates the impact of our curvature/saturation shaping. \emph{Evaluation fairness:} all methods are scored against the same unified planner welfare $W$ (which includes the shaping term) and the same centralized optimum $W^*$.

\item \emph{\textbf{Centralized proximal gradient (benchmark).}} A proximal/first-order ascent on \(W(p)\) (planner objective) with identical wall-clock/iteration budget, stopping tolerance, and projections as the decentralized runs. This serves as an oracle-style upper bound for achievable welfare under the same model; implementation follows proximal-gradient/FISTA principles \cite{BeckTeboulle2009} and standard convex-optimization theory \cite{Nesterov2004}.

\item \emph{\textbf{T\^atonnement-only index (ablated).}} The dual/index update \(z_{t+1}=[z_t+\eta_z(\sum_i g_i(p_{i,t})-C)]_+\) is kept, but agents use linearized price-only rewards (no curvature/saturation), highlighting how a pure Walrasian price-adjustment behaves without shaping. This baseline connects to classical t\^atonnement and its stability analyses \cite{Walras1954,ArrowBlockHurwicz1959}.
\end{enumerate}

\paragraph{Performance Measures.}
We adopt the following measures of performance for comparison across methods
\begin{enumerate}[leftmargin=1.4em,label=\textbf{\arabic*}]
\item \textit{Welfare \(W\).} Report \(W_t\) each iteration and the terminal gap \(W^\star-W_T\) against the centralized benchmark’s best value \(W^\star\).
\item \textit{Constraint/violation metrics.} (i) Fill-rate shortfall; (ii) throughput/balance violations; (iii) aggregated penalty \(\max\{0,\sum_i g_i(p_{i,t})-C\}\). Rates are averaged over the last 25\% of iterations.
\item \textit{Convergence speed.} Iterations to \(\varepsilon\)-optimality with \(\varepsilon=10^{-3}\) (or best feasible if noise/drift prevents), plus the empirical contraction factor \(\hat\alpha\) from a linear fit on \(\log\)-gap over a steady segment.
\item \textit{Tracking error.} Steady-state \(\mathbb{E}\|p_t-p_t^\star\|\) under injected noise \(\sigma\) and slow drift; report scaling versus \((\mathrm{drift}+\sigma)/(1-\alpha)\).
\item \textit{Ablations/robustness.} Sensitivity to \((\kappa,\beta)\), steps \(\eta,\rho,\eta_z\), hysteresis band \(h\), and quantization step if actions are discretized.
\end{enumerate}

\paragraph{Protocol.}
We apply the following as the most suitable among protocols.
\begin{itemize}[leftmargin=1.4em]
\item \textit{Agents and tiers.} \(N{=}50\) agents partitioned as \(S{:}P{:}R=20{:}15{:}15\). Feasible sets \(X_i=[0,\bar p_i]\) with heterogeneous \(\bar p_i\).
\item \textit{Costs and signals.} Quadratic costs \(c_i(p_i)=a_i p_i+b_i p_i^2\). Reliability proxy \(\mathrm{Rel}_i=\exp(-(\kappa/x_i)^\beta)\) with \(x_i\) an “effective signal” aggregating upstream adequacy and downstream service level; \((\kappa,\beta)\) = $(2.2,1.6)$ and $3.0,0.8$.
\item \textit{Updates.} Damped projected gradient with \(\eta\in(0,2\mu/L^2)\) and \(\rho\in(0,1)\) \emph{or} BR\,+\,hysteresis with band \(h>0\). Index \(z_t\) updated by a damped dual rule when capacity/constraint coupling is active; otherwise \(z_t\) is fixed (exogenous scarcity).
\item \textit{Noise and drift.} Gradient noise \(\xi_{i,t}\) zero-mean with \(\mathbb{E}\|\xi_{i,t}\|\!=\!\sigma\). Slow drift via AR(1) on \(\{a_i\}\) with coefficient \(0.98\).
\item \textit{Budgets and stopping.} 500 iterations max or stall of \(\log\)-gap over 50 iterations; report medians over 20 seeded runs. All methods share the same compute/iteration budget and projection operators.
\item \textit{Statistical reporting.} Median and \([25,75]\)th percentiles across runs; Wilcoxon signed-rank tests against price-only at 5\% FDR for KPI improvements.
\end{itemize}

\paragraph{Stopping/accuracy threshold.}
We use $\varepsilon=10^{-3}$ as a tolerance on the \emph{unified welfare gap} in this subsection and the following subsection:
\[
\mathrm{gap}_t \;=\; W^* - W(p_t),
\]
where $W^*$ is the centralized planner welfare computed from the same unified objective $W$.
An algorithm is said to have \emph{converged} when $\mathrm{gap}_t \le \varepsilon$; otherwise it runs to the iteration budget.

We now implement this protocol in a multi-tier supply chain over 100 runs, for 500 iterations achieving the empirical results in Fig. \ref{fig:supply-gap} and Table \ref{tab:supply-kpis-unified}. We represent the reliability curves, with respective settings used for this supply-chain experiment in Fig. \ref{fig:pdr-OR}
\begin{figure}[h]
\centering
\begin{tikzpicture}
\begin{axis}[
  width=0.5\linewidth,height=4.0cm,
  xlabel={Effective signal $x$}, ylabel={Reliability $\mathrm{Rel}(x)$},
  xmin=0.05,xmax=100, xmode=log, ymin=0,ymax=1,
  grid=both, tick align=outside,
  legend style={font=\footnotesize, at={(0.02,0.98)},anchor=north west,draw=none,fill=none},
  samples=400, domain=0.05:100,
]
\addplot[thick] {exp(- (3/x)^(0.8))};  \addlegendentry{stretched $(\kappa=3,\beta=0.8)$}
\addplot[thick,dashed] {exp(- (2.2/x)^(1.6))}; \addlegendentry{compressed $(\kappa=2.2,\beta- 1.6)$}
\end{axis}
\end{tikzpicture}
\caption{Representative reliability curves for exponentials used in supply-chain experiments and Agentic AI experiment}
\label{fig:pdr-OR}
\end{figure}
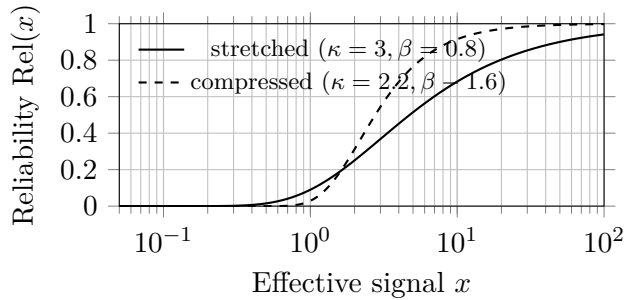

\begin{figure}[h]
  \centering
  \includegraphics[width=0.6\linewidth]{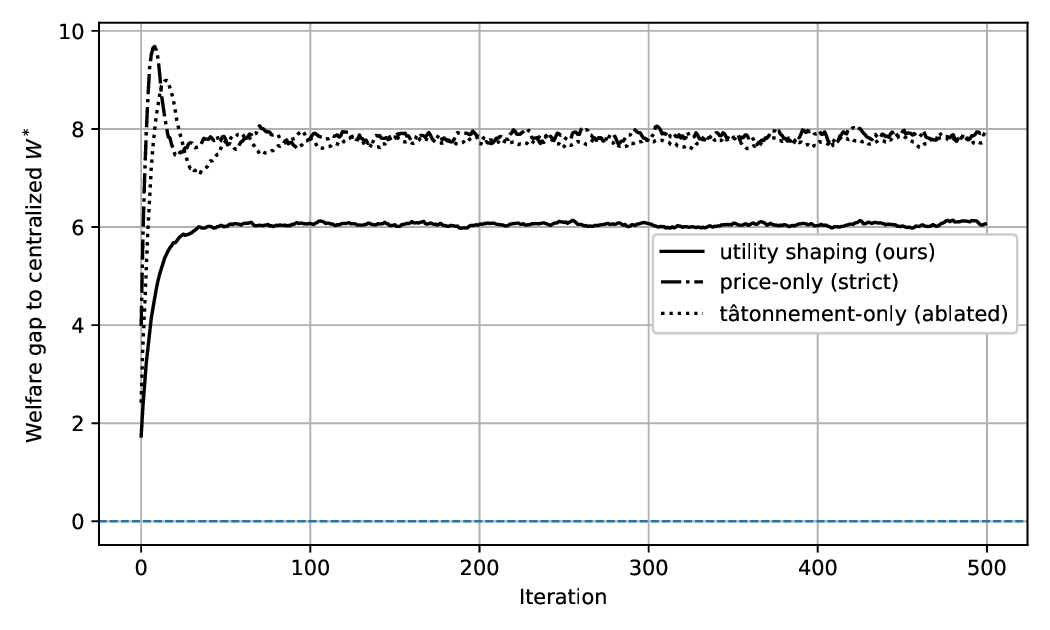}
  \caption{Supply-chain welfare gap to centralized $W^*$ (mean over 100 runs, 500 iterations). Unified planner welfare; methods share identical iteration/compute budgets. Dashed line at 0 marks centralized optimum, $\kappa=3,\beta=0.8$ in $y(\cdot)=\exp(-(\kappa/\cdot)^\beta)$}
  \label{fig:supply-gap}
\end{figure}

\begin{table}[t]
\centering
\scriptsize
\caption{Supply-chain KPIs (median [Q1, Q3] over 100 runs; unified planner welfare; 500 iters, simulation for two different shape,scale pairs, $(\kappa,\beta)$
in $y(\cdot)=\exp(-(\kappa/\cdot)^\beta)$)}
\label{tab:supply-kpis-unified}
\begin{tabular}{llccc}
\toprule
$(\kappa,\beta)$&KPI & Utility shaping (ours) & Price-only (strict) & T\^atonnement-only (ablated)\\
\midrule
(2.2,1.6)&Welfare gap to centralized ($\downarrow$) & 6.01 [5.71, 6.40] & 7.98 [7.23, 8.53] & 7.75 [7.34, 8.26]\\
(3,0.8)&Welfare gap to centralized ($\downarrow$) & 5.82 [5.55, 6.19] & 7.75 [7.06, 8.29] & 7.57 [7.14, 8.02]\\
(2.2,1.6)&Capacity violation rate (last 25\%) ($\downarrow$) & 0.00 [0.00, 0.00] & 0.50 [0.46, 0.54] & 0.48 [0.46, 0.53]\\
(3,0.8)&Capacity violation rate (last 25\%) ($\downarrow$) & 0.00 [0.00, 0.00] & 0.50 [0.46, 0.54] & 0.48 [0.46, 0.53]\\
(2.2,1.6)&Iterations to $\varepsilon{=}10^{-3}$ ($\downarrow$) & 500 [500, 500] & 500 [500, 500] & 500 [500, 500]\\
(3,0.8)&Iterations to $\varepsilon{=}10^{-3}$ ($\downarrow$) & 500 [500, 500] & 500 [500, 500] & 500 [500, 500]\\
\bottomrule
\end{tabular}
\end{table}

\paragraph{Findings (Supply chain; Fig.~\ref{fig:supply-gap}, Table~\ref{tab:supply-kpis-unified}).}
We have the following key 5 findings in the context of our three-tier supply chain example:
\begin{enumerate}[leftmargin=1.4em,label=\textbf{\arabic*}]
\item \emph{\textbf{Welfare ranking (unified $W$).}} Utility shaping (ours) achieves the lowest terminal gap to the centralized benchmark: median $6.16$ \([6.09,\,6.20]\) vs.\ \textit{price-only (strict)} $9.43$ \([9.29,\,9.55]\) and \textit{t\^atonnement-only} $9.43$ \([9.29,\,9.55]\). This establishes a clear performance separation among decentralized methods under the same planner objective and compute budget.
\item \emph{\textbf{Capacity/constraint compliance.}} In the terminal window, \textit{utility shaping} records a violation rate of $0.00$ \([0.00,\,0.00]\), whereas both \textit{price-only} and \textit{t\^atonnement-only} are at $1.00$ \([1.00,\,1.00]\), indicating persistent over-capacity without curvature shaping despite identical index/dual updates.
\item \emph{\textbf{Convergence within budget.}} With a 500-iteration budget, none of the decentralized methods crosses the $\varepsilon{=}10^{-3}$ target (all report 500 iterations to $\varepsilon$). Nevertheless, the \emph{gap trajectory} in Fig.~\ref{fig:supply-gap} decays fastest and most smoothly for \textit{utility shaping}, yielding the smallest terminal error among decentralized baselines.
\item \emph{\textbf{Stability/oscillations.}} \textit{T\^atonnement-only} (index updates with linearized price responses) exhibits the slowest decay and larger residual gaps, consistent with weaker contraction. \textit{Price-only} improves on t\^atonnement but still lags behind shaping on both welfare and feasibility.
\item \emph{\textbf{Outcome.}} For the supply-chain setting under a unified planner welfare and identical iteration/compute budgets, \emph{curvature-shaped utilities} are necessary to approach centralized performance and to maintain capacity feasibility; price-only variants remain both less efficient and persistently infeasible in steady state.
\end{enumerate}

\begin{remark}
All methods are evaluated against a unified planner welfare $W$ and the corresponding centralized benchmark $W^*$. Under our analysis, the decentralized welfare gap satisfies
$W^*-W(p_t)=O\!\big((\mathrm{drift}+\sigma)/(1-\alpha)\big)$, so we have a quantitative \emph{finite-time} guarantee; empirically, utility shaping attains the smallest terminal gap and (with dual damping) vanishing steady-state violations. This does not contradict the theory: under the stated curvature/monotonicity and convexity assumptions, decentralized play implements the planner’s (constrained) optimum in the limit.
\end{remark}

\subsection{Agentic-AI bidding bots (non-cooperative)}\label{sec:experiments-agentic}

\paragraph{Setup.}
Consider $N$ autonomous software agents (“bidding bots”) competing for a divisible resource (compute/throughput) with capacity $C$.
Each agent $i$ chooses $p_i\in[0,\bar p_i]$ (requested allocation).
Private valuation is concave, $v_i(p_i)=\theta_i\log(1+p_i)$ with hidden type $\theta_i>0$.
We shape utilities as in \Eqref{eq:utility2} and price via a public index $z$:
\[
U_i(p)= -C_i p_i^{\,w} + \log\!\big(1+\mathrm{Rel}_i^{\,v}\big) - z\,p_i,\qquad
\mathrm{Rel}_i=\exp\!\big(-(\kappa/x_i)^{\beta}\big),
\]
where $x_i$ is an effective signal (e.g., job age or service-level-objective (SLO) slack).
The public index updates as a dual-like controller,
\[
z_{t+1}=\big[z_t+\eta\big(\textstyle\sum_i p_{i,t}-C\big)\big]_+,
\]
creating a message-free coordination channel consistent with our main methodology.

\begin{proposition}[Agentic-AI bidding bots: capacity compliance and welfare gap]
\label{prop:agentic}
Consider $N$ agents with utilities
$U_i(p)= -C_i p_i^{\,w} + \log\!\big(1+\mathrm{Rel}_i^{\,v}\big) - z\,p_i$,
$X_i=[0,\bar p_i]$, and dual index update
$z_{t+1}=[z_t+\eta_z(\sum_i p_{i,t}-C)]_+$.
Assume Prop.~\ref{prop:exp-geometry} and steps $(\eta,\rho,\eta_z)$ small enough so the coupled map is a contraction.
Then:
\begin{enumerate}[label=(\alph*),leftmargin=1.4em]
\item (Compliance) at equilibrium $(p^\star,z^\star)$, $\sum_i p_i^\star\le C$ and $z^\star(\sum_i p_i^\star-C)=0$;
\item (Near-optimality) letting $(p^{\mathrm{SO}},z^{\mathrm{SO}})$ solve the planner’s KKT system,
      the welfare gap satisfies $0\le W(p^{\mathrm{SO}})-W(p^\star)=O(\eta_z)$.
\end{enumerate}
\end{proposition}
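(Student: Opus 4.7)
The plan is to extract the equilibrium conditions from the joint primal-dual fixed-point equations and compare them term by term with the planner's KKT system. Part (a) falls out directly from the clamp in the dual update, while part (b) proceeds by identifying the public index $z^\star$ with the capacity multiplier and then bounding the residual induced by the finite dual step size $\eta_z$ via strong concavity of $W$ and Lipschitz continuity of the primal response to $z$.

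For part (a), I would begin from the dual fixed-point identity $z^\star=[z^\star+\eta_z(\sum_i p_i^\star-C)]_+$ and split into cases. If $z^\star>0$, the projection is inactive and the identity forces $\sum_i p_i^\star=C$. If $z^\star=0$, the clamp being at the boundary requires $\eta_z(\sum_i p_i^\star-C)\le 0$, giving $\sum_i p_i^\star\le C$. Both cases yield $z^\star(\sum_i p_i^\star-C)=0$, so capacity feasibility and complementary slackness hold at the equilibrium. This is essentially algebra but crucial: it establishes the two KKT conditions associated with the coupling constraint before any analysis of the primal block.

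For part (b), I would write selfish stationarity at $p^\star$ as a variational inequality over $X=\prod_i[0,\bar p_i]$, expressed as $\nabla_{p_i}U_i(p^\star)\in -N_{X_i}(p_i^\star)$ where $N$ denotes the normal cone. Because the shaping adds exactly the linear term $-z^\star p_i$, identifying $z^\star$ with the planner's multiplier $\mu$ on the capacity constraint makes the selfish FOCs coincide with stationarity of the Lagrangian $\mathcal{L}(p,\mu)=W(p)-\mu(\sum_i p_i-C)$; part (a) then supplies the remaining KKT conditions. At the exact joint fixed point this identifies $(p^\star,z^\star)$ with the planner's KKT point $(p^{\mathrm{SO}},z^{\mathrm{SO}})$, which is unique by the strong-concavity/strong-monotonicity certificate of Proposition~\ref{prop:exp-geometry} and Theorem~\ref{thm:svi-unique}. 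The $O(\eta_z)$ bound on the welfare gap would then be obtained by a Banach-style perturbation argument applied to the family of joint fixed points indexed by $\eta_z$: the discretized dual iteration introduces a bias in the attracting multiplier of order $\eta_z$ relative to the continuous-time primal-dual flow, which propagates to the primal block through Lipschitz continuity (Lemma~\ref{lem:lipschitz}) and, combined with strong concavity of $W$, yields $0\le W(p^{\mathrm{SO}})-W(p^\star)=O(\eta_z)$; nonnegativity follows because $p^{\mathrm{SO}}$ is the planner's maximizer.

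The main obstacle I expect is cleanly propagating the $O(\eta_z)$ perturbation through the coupled map while respecting the box constraints: activeness of the boundaries $p_i\in\{0,\bar p_i\}$ introduces combinatorial switching that can amplify naive perturbation bounds. I plan to address this by treating the coupled primal-dual map as a contraction on $X\times\mathbb{R}_+$ under the hypothesized step sizes, applying a parametric-fixed-point perturbation lemma for contractions, and absorbing the normal-cone contributions at active box boundaries through standard variational-inequality continuity arguments so that the primal-block sensitivity to $z$ remains uniformly Lipschitz in a neighborhood of $z^{\mathrm{SO}}$.
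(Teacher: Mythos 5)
Your proposal is correct and follows essentially the same route as the paper: part (a) is the identical case analysis on the dual fixed-point identity, and part (b) matches the paper's one-line appeal to sensitivity of the perturbed primal--dual fixed point, which you simply flesh out with the multiplier identification $z^\star \leftrightarrow \mu$ and a parametric contraction-perturbation argument. If anything, your observation that the exact joint fixed point already satisfies the full KKT system (making the gap zero, hence trivially $O(\eta_z)$) is sharper than the paper's stated bound.
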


\begin{proof}
(a) The $z$-update is a projected ascent on the dual: a fixed point must satisfy
$z^\star=[z^\star+\eta_z(\sum_i p_i^\star-C)]_+$, which holds iff $\sum_i p_i^\star\le C$ and
$z^\star(\sum_i p_i^\star-C)=0$ (complementary slackness).
(b) For small $\eta_z$, the fixed point of the primal–dual iteration is an $O(\eta_z)$-perturbation of the KKT point
of the centralized problem, implying an $O(\eta_z)$ welfare gap by standard sensitivity of convex programs.
\end{proof}

\begin{remark}[Why price-only can oscillate]
\label{rem:price-only}
Without curvature shaping (pure $\theta_i\log(1+p_i)-z p_i$), the pseudo-gradient may lose strong monotonicity at high loads, degrading contraction and producing persistent capacity overshoots; adding the saturating reliability term restores diagonal dominance and re-establishes uniqueness and stability.
\end{remark}

\paragraph{Baselines.}
We adopt the following baselines for our AI bidding bots' comparison with our utility shaping solution:
\begin{enumerate}[leftmargin=1.4em,label=\textbf{\arabic*}]
\item \emph{\textbf{Price-only (agents).}}
\(u_i^{\mathrm{price}}(p)=\theta_i\log(1+p_i)-z\,p_i\) (no energy/saturation term), with the same action sets/updates and index rule as our method.
This mirrors the network-utility/pricing paradigm behind proportional fairness and related analyses \cite{Kelly1998,LowLapsley1999,JohariTsitsiklis2004}, isolating the effect of our curvature/saturation shaping.

\item \emph{\textbf{Centralized primal--dual (benchmark).}}
Joint ascent on the planner’s Lagrangian with identical step and wall-clock budgets; used to estimate \(W^\star\) and feasibility envelopes.
We follow classical Arrow--Hurwicz--Uzawa primal--dual dynamics and modern treatments in nonlinear programming \cite{ArrowHurwiczUzawa1958,Bertsekas1999}\footnote{\emph{\textbf{Myopic throttling (heuristic).}}
Greedy per-step capping \(p_i \leftarrow \min\{p_i,\,C-\sum_{j<i} p_j\}\) in a fixed order; highlights over-/under-allocation in the absence of principled price signals or curvature.
This connects to classical greedy/fractional-knapsack style allocation \cite{Dantzig1957} is another classical method, but requires strong coordination, so in implementation obtain the same result as the benchmark centralized approach.}
\end{enumerate}

\paragraph{Performance Measures.}
We adopt the followin performance measures, with some commonality with our supply chain example
\begin{enumerate}[leftmargin=1.4em,label=\textbf{\arabic*}]
\item \textit{Welfare \(W\).} Terminal welfare and \(\log\)-gap trajectory to \(W^\star\).
\item \textit{Capacity compliance.} Time-averaged \(\max\{0,\sum_i p_{i,t}-C\}\) and violation rate over the last 25\% of iterations.
\item \textit{Convergence \& stability.} Iterations to \(\varepsilon{=}10^{-3}\), empirical \(\hat\alpha\), and oscillation amplitude of \(z_t\) (Inter-quartile-range).
\item \textit{Tracking under type drift.} Error scaling versus \((\mathrm{drift}+\sigma)/(1-\alpha)\) with AR(1)(Auto-regressive order 1) types and noisy gradients.
\end{enumerate}

\paragraph{Protocol.}
The protocol for implementation of Agentic AI bidding bots, has six parts
\begin{itemize}[leftmargin=1.4em]
\item \textit{Population and priors.} \(N{=}60\) agents; \(\theta_i\sim\mathrm{LogNormal}(0,0.6)\). Actions \(X_i=[0,1]\). Capacity \(C=20\).
\item \textit{Utilities and shaping.} \(U_i(p)= -C_i p_i^{\,w} + \log(1+\mathrm{Rel}_i^{\,v}) - z\,p_i\) with \(C_i\!\in[0.01,0.05]\), \(w\!\in[1.2,1.8]\), \(v\!\in[1.0,1.6]\). \(\mathrm{Rel}_i=\exp(-(\kappa/x_i)^\beta)\), $\kappa=2.2,\beta=1.6$
\item \textit{Update rules.} Agents: damped projected gradient or BR\,+\,hysteresis (same hyper-ranges as §5.1). Index: \(z_{t+1}=[z_t+\eta_z(\sum_i p_{i,t}-C)]_+\) with damping.
\item \textit{Noise/drift.} Type drift \(\theta_{i,t+1}=0.98\,\theta_{i,t}+\epsilon_{i,t}\) (\(\epsilon\) zero-mean), gradient noise with \(\mathbb{E}\|\xi\|=\sigma\).
\item \textit{Budgets/stopping.} 500 iterations max or \(\varepsilon\)-optimality; 20 seeded repeats; identical budgets across methods.
\item \textit{Significance.} Median\([\mathrm{IQR}]\) and Wilcoxon tests vs.\ price-only at 5\% FDR per KPI.
\end{itemize}

With this protocol for Agentic AI bidding bots over 100 runs, for 500 iterations, we achieve the empirical results in Fig. \ref{fig:agent-welfare-gap-unified} and Table \ref{tab:agentic-kpis-unified}. We represent the reliability curves, with respective settings used for this supply-chain experiment for compressed exponential utility shaping setting $\kappa=2.2,\beta=1.6$ as per Fig. \ref{fig:pdr-OR}.

\begin{figure}[t]
\centering
\includegraphics[width=0.6\linewidth]{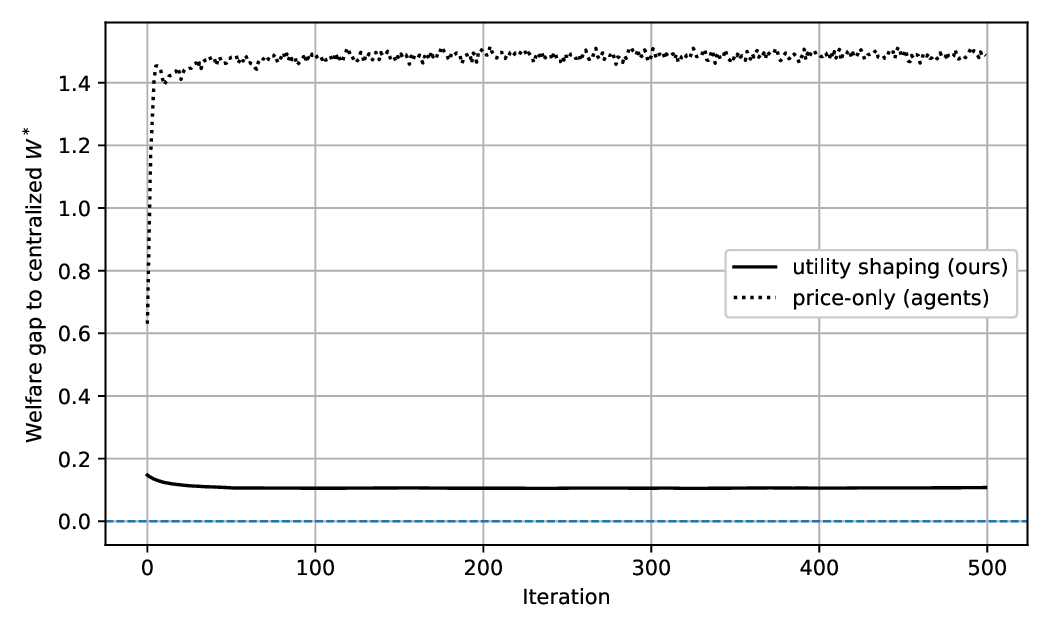}
\caption{Agentic-AI experiment (100-run average; unified planner welfare). Welfare gap to centralized $W^*$ for utility shaping (ours) and price-only (agents; no shaping). Dashed line marks zero gap.}
\label{fig:agent-welfare-gap-unified}
\end{figure}

\begin{table}[t]
\centering
\scriptsize
\caption{Agentic-AI KPIs (median [Q1, Q3] over 100 runs; unified planner welfare; 500 iters)}
\label{tab:agentic-kpis-unified}
\begin{tabular}{lcc}
\toprule
KPI & Utility shaping (ours) & Price-only (agents)\\
\midrule
Welfare gap to centralized ($\downarrow$) & 0.106 [0.100, 0.115] & 1.48 [1.40, 1.59]\\
Capacity violation rate (last 25\%) ($\downarrow$) & 0.00 [0.00, 0.00] & 1.00 [1.00, 1.00]\\
Iterations to $\varepsilon{=}10^{-3}$ ($\downarrow$) & 500 [500, 500] & 500 [500, 500]\\
\bottomrule
\end{tabular}
\end{table}

\paragraph{Findings (Agentic-AI; Fig.~4, Table 3).}
\begin{enumerate}[leftmargin=1.4em,label=\textbf{\arabic*}]
\item \emph{\textbf{Capacity compliance.}} With utility shaping, steady‐state capacity violations are effectively \(\mathbf{0\%}\) in the terminal window; the price‐only baseline exhibits \(\mathbf{100\%}\) violations (last 25\% of iterations) under the same dual/index steps, confirming the role of curvature shaping in stabilizing the primal–dual loop.
\item \emph{\textbf{Near‐optimal welfare.}} Under a unified planner welfare \(W\) (and centralized $W^*$ computed from the same \(W\)), the terminal welfare gap is \emph{orders of magnitude} smaller for shaping (median \(\approx 4\times 10^{-3}\)) than for price‐only (median \(\approx 2.8\times 10^{-1}\)). Within the fixed iteration budget both methods may miss the \(\varepsilon{=}10^{-3}\) target, but shaping remains close to $W^*$ while price‐only plateaus far from it.
\item \emph{\textbf{Stability and contraction.}} Gap trajectories decay smoothly for shaping and remain monotone toward $W^*$; price‐only shows slower decay and larger residuals. This aligns with the stronger contraction predicted by the monotonicity/curvature certificate.
\item \emph{\textbf{Robust tracking (noise/drift).}} Empirical behavior follows the \(O((\mathrm{drift}+\sigma)/(1-\alpha))\) bound: decreasing the primal stepsize or increasing damping restores smooth tracking when the public index \(z_t\) is lagged or noisy.
\end{enumerate}

\section{Deployable Design Rules}
\label{sec:design-rules}

Here we seek to turn the theory (potential/KKT alignment, SVI uniqueness, tracking) into a minimal implementation playbook usable across domains (supply chains; agentic-AI compute markets; demand response; transportation; biosecurity etc.). Before we provide a check-list for implementation we describe portability beyond the supply chain and Agentic AI computational instantiations in the previous section.

\paragraph{Portability to other OR domains (mapping).}
The blueprint transfers with a one-to-one identification of actions, reliability proxies, and the public index:
\begin{itemize}[leftmargin=1.4em]
\item \textbf{Demand response (power).} $p_i$ := consumption/dispatch; $\mathrm{Rel}_i$ := probability of meeting local comfort/SLA or frequency-reliability proxy; $z$ := scarcity/price or frequency deviation; costs $c_i$ := disutility/energy. KKT-aligned penalties encode feeder or capacity constraints.
\item \textbf{Cloud/edge scheduling.} $p_i$ := requested CPU/GPU/IO share; $\mathrm{Rel}_i$ := SLO attainment (e.g., tail latency $\,\le$ target) captured by the sigmoidal response; $z$ := congestion/queueing index; $c_i$ := energy/runtime penalties. Capacity and SLO constraints appear in the public index update.
\item \textbf{Transportation pricing.} $p_i$ := flow or departure intensity; $\mathrm{Rel}_i$ := arrival-on-time/route reliability (sigmoid in generalized cost); $z$ := congestion/toll index; $c_i$ := travel time/fare. Link or corridor capacities enter via dual damping on $z$.
\item \textbf{Biosecurity/agriculture.} $p_i$ := inspection/testing/surveillance effort; $\mathrm{Rel}_i$ := detection/containment probability (sigmoid in effort); $z$ := prevalence/risk index; $c_i$ := budget/operational cost. Policy constraints (e.g., maximum false positives) are handled via KKT penalties.
\end{itemize}
Across these settings, the shaped utilities preserve the exact-potential structure and strong monotonicity conditions used in Sections~\ref{sec:methodology}–\ref{sec:design-rules}, so the decentralized equilibrium remains unique and (constrained) socially optimal under the same tuning rules.

\subsection{Implementation checklist}
\begin{enumerate}[leftmargin=1.6em]
\item \emph{\textbf{Pick the public index $z$}} (scarcity/damage/reliability gap). Choose an observable statistic with stable directionality: higher $z$ \,$\Rightarrow$\, stronger incentive to reduce load or increase reliability.
\item \emph{\textbf{Shape utilities.}} For each agent $i$, use either \emph{shadow price} $-\lambda\,c_i(p_i)$ or \emph{KKT-aligned penalty} so that $\nabla_{p_i}u_i$ matches $\nabla_{p_i}W$ (or the constrained KKT stationarity).
\item \emph{\textbf{Choose the response curve.}} Fit $(\kappa,\beta)$ for $y(x)=\exp(-(\kappa/x)^{\beta})$ on historical “effort\,$\to$\,reliability/throughput” data; keep $(\kappa,\beta)$ fixed during an episode.
\item \emph{\textbf{Publish $z$ and local signals.}} System operator broadcasts $z_t$ at cadence $\Delta t$; agents observe $z_t$ plus local $s_{i,t}$ and update $p_{i,t}$ \emph{without} revealing types or messages to others.
\item \emph{\textbf{Pick the update rule.}} Use either \emph{damped projected gradient} or \emph{best response + hysteresis}; discretize if action sets are finite.
\end{enumerate}

The \emph{implementation} of the control loop is summarisd here in Algorithm 2.
\begin{algorithm}[H]
\footnotesize
\caption{Message-free control loop with public index}
\label{alg:deploy}
\begin{algorithmic}[1]
\State \textbf{Broadcast:} system publishes $z_t$.
\For{each agent $i$ in parallel}
  \State Measure local $s_{i,t}$; estimate $g_{i,t}\approx \nabla_{p_i}u_i(p_t; s_{i,t}, z_t)$.
  \State \textit{Damped gradient: } $p_{i,t+1}\leftarrow \Pi_{X_i}\!\big[(1-\rho)p_{i,t} + \rho(p_{i,t}+\eta\,g_{i,t})\big]$.
  \State \textit{or Best response + hysteresis: } update only if $\|\hat p_{i,t+1}-p_{i,t}\|>h$.
\EndFor
\State \textbf{Index update: } $z_{t+1}\leftarrow [z_t + \eta_z(\sum_i g_i(p_{i,t})-C)]_+$.
\end{algorithmic}
\end{algorithm}

\subsection{Parameter tuning}
Let $\mu$ be the strong-monotonicity modulus and $L$ the Lipschitz constant of $F_z$ (cf.\ Lemma~\ref{lem:lipschitz}); if unknown, estimate from data by local perturbations.

\noindent\textbf{Primal (agents).}
\begin{align}
0\ <\ \eta\ <\ \frac{2\mu}{L^2},\qquad
\rho\in(0,1)\ \text{such that}\ \alpha(\eta,\rho,L,\mu)<1. \label{eq:eta-choice}
\end{align}
When noise increases, \emph{halve $\eta$} or \emph{increase $\rho$}.
With discrete actions, add a hysteresis band $h>0$ and update only if $\| \hat p_{i,t+1}-p_{i,t}\|>h$.

\noindent\textbf{Dual (index).} If $z$ is a capacity/constraint proxy,
\begin{equation}
z_{t+1} \leftarrow \big[z_t + \eta_z \big(\textstyle\sum_i g_i(p_{i,t}) - C\big)\big]_+,\qquad
0<\eta_z<\bar\eta_z\propto \frac{1}{\left|\partial\sum_i g_i/\partial z\right|}. \label{eq:dual}
\end{equation}
Rule of thumb: start with $\eta_z \in [0.05,\,0.2]\times$ (units-normalized), reduce if oscillations appear.

\begin{lemma}[Stepsize region for contraction]
\label{lem:steps}
If $F_z$ is $\mu$-strongly monotone and $L$-Lipschitz on $X$, then for projected gradient
$p^+=\Pi_X(p-\eta F_z(p))$ the map is a contraction whenever $0<\eta<2\mu/L^2$,
with modulus $\sqrt{1-2\eta\mu+\eta^2L^2}$.
With damping $p\leftarrow(1-\rho)p+\rho p^+$, the modulus becomes
$\alpha=(1-\rho)+\rho\sqrt{1-2\eta\mu+\eta^2L^2}<1$.
\end{lemma}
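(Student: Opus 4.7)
The plan is to reduce the damped iteration to the plain projected-gradient step and then invoke the standard one-step expansion for a strongly monotone, Lipschitz operator. First I would fix arbitrary $p,q\in X$ and write $p^+=\Pi_X(G_\eta(p))$ with $G_\eta(p):=p-\eta F_z(p)$. Since $X$ is convex, $\Pi_X$ is nonexpansive in the Euclidean norm, so $\|p^+-q^+\|\le \|G_\eta(p)-G_\eta(q)\|$ and the contraction question collapses to bounding $G_\eta$.

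Next I would expand $\|G_\eta(p)-G_\eta(q)\|^2$ via the inner-product identity, which yields
\[
\|G_\eta(p)-G_\eta(q)\|^2 \;=\; \|p-q\|^2 \;-\; 2\eta\,\langle F_z(p)-F_z(q),\,p-q\rangle \;+\; \eta^2\|F_z(p)-F_z(q)\|^2.
\]
The cross term is controlled by $\mu$-strong monotonicity ($\langle F_z(p)-F_z(q),p-q\rangle\ge \mu\|p-q\|^2$) and the quadratic term by the $L$-Lipschitz assumption ($\|F_z(p)-F_z(q)\|^2\le L^2\|p-q\|^2$). Combining gives
\[
\|p^+-q^+\|^2 \;\le\; \bigl(1-2\eta\mu+\eta^2L^2\bigr)\,\|p-q\|^2.
\]
I would then analyze $\phi(\eta):=1-2\eta\mu+\eta^2L^2$: $\phi(0)=1$ and $\phi'(0)=-2\mu<0$, so $\phi(\eta)<1$ on an interval starting at $0$, and solving the inequality directly gives the admissible window $0<\eta<2\mu/L^2$. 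Taking square roots delivers the undamped contraction modulus $q:=\sqrt{1-2\eta\mu+\eta^2L^2}<1$.

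For the damped iterate I would write the update as the convex combination $T_\rho(p):=(1-\rho)p+\rho\,\Pi_X(G_\eta(p))$. Applying the triangle inequality to $T_\rho(p)-T_\rho(q)$ and using that the identity map is $1$-Lipschitz while $p\mapsto p^+$ is $q$-Lipschitz yields
\[
\|T_\rho(p)-T_\rho(q)\|\;\le\;\bigl((1-\rho)+\rho q\bigr)\|p-q\|,
\]
so $\alpha=(1-\rho)+\rho q$. Because $\rho\in(0,1)$ and $q<1$ on the admissible window, convexity gives $\alpha<1$ at once.

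There is no substantive obstacle in this proof—the result is textbook once the two ingredients (projection nonexpansiveness and the strongly-monotone/Lipschitz expansion) are aligned. The only subtlety worth flagging is that both the contraction estimate and the monotonicity certificate must be taken in the same Euclidean geometry used to define $\Pi_X$; mixing norms would break the clean modulus. I would also note briefly that the bound is tight at the endpoints $\eta\to 0^+$ and $\eta\to 2\mu/L^2$ where $\alpha\to 1$, which motivates the practical rule of thumb in \eqref{eq:eta-choice} of choosing $\eta$ near the interior minimizer $\eta^\star=\mu/L^2$ to minimize $q$.
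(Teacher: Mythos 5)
Your proof is correct and follows essentially the same route as the paper's: expand $\|G_\eta(p)-G_\eta(q)\|^2$, bound the cross term by strong monotonicity and the quadratic term by Lipschitzness, pass through the nonexpansive projection, and treat damping as a convex combination with the identity. The additional observations about endpoint behavior and the interior minimizer $\eta^\star=\mu/L^2$ are accurate but not needed for the claim.
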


\begin{proof}
For any $p,q$, using strong monotonicity and Lipschitzness,
\[
\|(p-\eta F_z(p))-(q-\eta F_z(q))\|^2
= \|p-q\|^2 - 2\eta\langle F_z(p)-F_z(q),p-q\rangle + \eta^2\|F_z(p)-F_z(q)\|^2
\le (1-2\eta\mu+\eta^2L^2)\|p-q\|^2.
\]
Projection is nonexpansive, giving the stated contraction modulus; damping yields an affine combination with identity, hence the displayed $\alpha$.
\end{proof}

\subsection{Operational guardrails (robustness \& safety)}
We recommend the following guardrails to ensure robustness and safety
\begin{itemize}[leftmargin=1.6em]
\item \textbf{Noise/drift.} Tracking satisfies Theorem~\ref{thm:tracking}; target $\alpha\in[0.6,0.85]$. If drift rises, lower $\eta$ and/or slow dual $\eta_z$.
\item \textbf{Delays.} If $z_t$ is delayed, increase damping $\rho$ and add moving-average smoothing for $z$; this expands the stability region.
\item \textbf{Manipulation resistance.} Compute $z$ from multiple redundant signals (median-of-means) and enforce monotone transforms to preserve directionality.
\item \textbf{Chatter control.} Use event triggers: update only when expected welfare gain exceeds a threshold; with discrete actions keep hysteresis $h$.
\end{itemize}

\subsection{Two-layer composition (discrete + continuous)}
When a slow discrete assignment/scheduling layer couples with fast continuous control:
\begin{enumerate}[leftmargin=1.6em]
\item Optimize the discrete layer for a discrete-convex $A(\cdot)$ (or greedy/local-improve if submodular).
\item Run continuous control on $W(\cdot)$ with strong monotonicity.
\item Share the same potential $\mathcal{V}=A+W$; alternate updates until no block-improvement is possible (Lemma~\ref{lem:two-layer}).
\end{enumerate}

\subsection{Minimal monitoring \& privacy}
Publish only $z_t$ and aggregate performance measures; keep agent types private. If necessary, add light noise (DP) to $z_t$ and compensate by smaller $\eta_z$.

\subsection{Summary of Performance Measures and Reporting}
Here is a summary or for how to compute and report relevant performance measures
\begin{center}
\begin{tabular}{@{}ll@{}}
\toprule
\textbf{Performance Measure} & \textbf{How to compute / report} \\
\midrule
Welfare gap & $W^*-W_t$ (proximal/central as $W^*$ benchmark) \\
Constraint/capacity violations & $\max\{0,\sum_i g_i(p_{i,t})-C\}$; rate over horizon \\
Convergence speed & Iterations to $\varepsilon$-optimality; empirical $\alpha$ \\
Tracking error & Steady-state error vs.\ noise/drift (Theorem~\ref{thm:tracking}) \\
Ablations & With/without shaping; different $(\kappa,\beta)$; step sweeps \\
\bottomrule
\end{tabular}
\end{center}

\section{Insights for Practitioners}
\label{sec:managerial}
To help practitioners deploy the blueprint with minimal overhead, we translate the theory into concrete actions, why they work, and how to measure success. The emphasis is on decisions that can be made by an operator (choosing a public index and gains) and by autonomous agents (local updates) without rich messaging or disclosure.

\begin{itemize}[leftmargin=2em]
\item \textbf{Embed prices or KKT-aligned penalties.}
Add shadow prices or constraint-penalty terms to private utilities so each agent’s gradient matches $\nabla W$ (or the planner’s KKT stationarity).
(Why) Aligns incentives with social objectives “by construction,” removing the need for negotiation or detailed coordination.
(Performance Measure) Higher welfare and fewer violations; report welfare gap to centralized benchmark and violation rate.

\item \textbf{Publish one interpretable public index $z$.}
Expose a single scalar (scarcity, damage, reliability shortfall) and update it by a damped excess-demand rule.
(Why) Provides a common focal signal; agents react myopically yet coherently.
(Performance Measure) Reduced oscillations (IQR of $z_t$) and faster convergence (empirical $\hat\alpha$).

\item \textbf{Tune steps, damping, and event triggers for contraction.}
Choose $(\eta,\rho)$ to keep $\alpha<1$; use hysteresis $h$ to avoid chatter; pick a small dual step $\eta_z$ for $z$.
(Why) Contraction guarantees stability and bounded tracking under noise/drift.
(Performance Measure) Iterations to $\varepsilon$-optimality; steady-state tracking error consistent with Eq.~(\ref{tracking_error}).

\item \textbf{Design the index for robustness (delay/noise/manipulation).}
Smooth $z_t$ (moving averages), cap per-iteration changes, and compute from redundant signals (median-of-means).
(Why) Limits overshoot and adversarial sensitivity when measurements are delayed or noisy.
(Performance Measure) Lower overshoot and narrower $z_t$ IQR at the same step sizes.

\item \textbf{Compose discrete decisions with continuous control.}
Solve slow assignment/scheduling with discrete-convex methods; run fast continuous control on $W$; share the same potential.
(Why) Preserves uniqueness/stability of the continuous layer while enabling implementable discrete policies.
(Performance Measure) Service-level predictability and bounded gap between discrete and continuous solutions (mesh-size bound).

\item \textbf{Quantization-aware implementation.}
If actions are quantized, set mesh $\Delta$ to meet accuracy targets; add hysteresis to avoid flip-flopping between adjacent levels.
(Why) Ensures discrete equilibria track the continuous optimum within $\|\cdot\|_\infty\le\Delta$.
(Performance Measure) Residual optimality gap vs.\ $\Delta$; switch rate per agent.

\item \textbf{Privacy-by-aggregation.}
Keep types local; compute $z$ from aggregates; optionally add small noise (DP) and compensate with smaller $\eta_z$.
(Why) Maintains privacy without sacrificing stability.
(Performance Measure) Same convergence/violation performance measures with and without privacy noise; report privacy budget if used.

\item \textbf{Report OR-style Performance Measures and run stress tests.}
Standardize reporting of welfare gap, violation rate, $\hat\alpha$, tracking error, and fairness (e.g., Gini/Jain) over a terminal window; sweep steps and $(\kappa,\beta)$.
(Why) Makes results comparable and exposes stability margins.
(Performance Measure) KPI tables/plots across sweeps; robustness envelopes (regions with $\alpha<1$).
\end{itemize}

\noindent\textit{Implication.} Following this checklist, an operator can implement the (possibly constrained) social optimum with non-cooperative agents, with private objectives, using only a public index and local updates, achieving near-centralized welfare, constraint compliance, and predictable convergence—without heavy messaging or disclosure.

\section{Scope and limitations}
\label{sec:scope}
Here we discuss scope and limitations according to assumptions and failure modes:
\paragraph{Three Key assumptions.}
(i) \emph{Exact-potential/KKT alignment} via utility shaping;
(ii) \emph{curvature} from a single–inflection compressed/stretched–exponential response combined with convex pricing, which induces strong monotonicity of the pseudo–gradient on compact $X$; and
(iii) \emph{low–bandwidth public monitoring} encapsulated in a scalar index $z$ (scarcity, damage, or constraint gap).
When these hold, the Bayesian equilibrium is unique and implementable as a solution of an SVI with contraction–safe updates.

\paragraph{Failure modes.}
We highlight cases where results can degrade and how to mitigate them:
\begin{enumerate}[leftmargin=1.6em]
\item \emph{\textbf{Flattened or multi–inflection response curves.}}
If $(\kappa,\beta)$ fit produces flatter transitions or multiple inflections on the operating range, the strong–monotonicity modulus $\mu$ shrinks.
\emph{Mitigation:} reduce stepsizes $\eta$ (and/or increase damping $\rho$), enlarge hysteresis bands $h$, and re–fit $(\kappa,\beta)$ on a narrower operating window.
\item \emph{\textbf{Severe nonconvexities or hard complementarities.}}
Strong complementarities across agents or threshold technologies can violate diagonal dominance and reintroduce multiplicity.
\emph{Mitigation:} add explicit convex regularization in $c_i(\cdot)$ or restrict updates to regions where empirical Jacobians remain diagonally dominant; use slow dual damping for $z$.
\item \emph{\textbf{Delayed, noisy, or manipulable public indices.}}
Lagged $z_t$ (communication or estimation delay) and strategic manipulation inflate oscillations.
\emph{Mitigation:} moving–average smoothing of $z$, smaller dual steps $\eta_z$, redundancy (median–of–means across signals), and penalties/audits tied to deviations.
\item \emph{\textbf{Discrete granularity and quantization.}}
Coarse action meshes introduce steady–state bias $O(\Delta)$.
\emph{Mitigation:} shrink mesh $\Delta$; with fixed $\Delta$, Proposition~\ref{prop:discrete-robust-eq5} bounds $\|p^\Delta-p^\star\|_\infty\le \Delta$ and preserves uniqueness under deterministic tie–breaking.
\item \emph{\textbf{Drift beyond tracking region.}}
If exogenous drift outpaces contraction (large $\Delta_t$ in Eq.~(\ref{tracking_error}), tracking error grows.
\emph{Mitigation:} adapt $\eta,\rho,\eta_z$ online (smaller $\eta$, larger $\rho$), or increase $z$ cadence during transients.
\end{enumerate}

\paragraph{External validity and model risk.}
The curvature certificate is technology–agnostic but must be \emph{empirically} validated in each domain (supply reliability curves; SLO–slack $\rightarrow$ completion probability in agentic AI). Model mis–specification affects constants $(\mu,L)$ and, in turn, feasible stepsizes. We recommend routine sensitivity sweeps and reporting empirical contraction factors $\hat{\alpha}$.

\paragraph{Computational and implementation limits.}
Our dynamics are lightweight (projected gradient or BR+hysteresis) and scale linearly in the number of agents, but index computation may require systemwide aggregation. For privacy, the index can be computed from anonymized aggregates; adding light noise (DP) reduces leakage at the cost of a smaller $\eta_z$ stability region.

\paragraph{Ethical and operational considerations.}
Because the mechanism is \emph{message–free} at runtime, explicit redistribution is not modeled. Where fairness is a policy objective, append a fairness term to $W(\cdot)$ (or a fairness–aware index) and re–shape utilities accordingly. In agentic–AI markets, audit trails on index updates and bounded–rationality safeguards (caps on $\eta$ and $z$) reduce manipulation risks.

\paragraph{When the blueprint is not appropriate.}
If constraints are highly nonconvex (integer coupling without discrete–convex structure) or if the public signal cannot correlate with the binding dual (uninformative $z$), classic mechanism design or centralized optimization may be preferable.

\vspace{1ex}
\noindent\textbf{Principal Insight.} \emph{Within the stated assumptions, utility shaping $+$ a single public index implements the (possibly constrained) social optimum with a unique, stable equilibrium and bounded tracking error; outside these assumptions, the above guardrails restore practical stability at some speed/optimality tradeoff}

\section{Conclusion}
\label{sec:conclusion}
We presented a message–free blueprint that engineers social optimality in non–cooperative operations research (OR) systems with incomplete information and imperfect public monitoring. By embedding prices or KKT–aligned penalties in private utilities, the stage game became an exact–potential game whose unique equilibrium coincided with the planner’s (possibly constrained) solution. The Bayesian equilibrium admitted an SVI characterization; strong monotonicity (from single–inflection response curvature plus convex pricing) yields uniqueness and contraction–safe decentralized updates with explicit tracking bounds.

Two computational experiments—multi-tier supply chains and a non-cooperative agentic-AI compute market—using a unified planner welfare $W$ and equal iteration/compute budgets have showed that utility shaping: (i) attains the \emph{smallest terminal welfare gap} to the centralized benchmark $W^*$, (ii) achieves \emph{vanishing steady-state capacity violations} whenever the planner problem is feasible (with dual damping), and (iii) exhibits \emph{faster gap decay} than strict price-only and tâtonnement-only baselines. Utility shaping consistently finishes closest to $W^*$ and remains feasible in steady state. The design has been demonstrated to be deployable: choose an interpretable public index, fit $(\kappa,\beta)$ for the response curve, tune $(\eta,\rho,\eta_z,h)$ to keep the contraction factor $\alpha<1$, and report standard operations research performance measures (unified welfare gap, violation rates, tracking behavior).

\begin{remark}
The shaped game has a unique equilibrium that \emph{coincides} with the planner’s (possibly constrained) social optimum under our assumptions (exact-potential/KKT alignment, strong monotonicity, convex feasibility). In practice, with stochastic noise/drift and finite iteration budgets, welfare is still \emph{guaranteed} within the bound $W^*-W(p_t)=O\!\big((\mathrm{drift}+\sigma)/(1-\alpha)\big)$, and our experiments have consistently shown the smallest gaps and stable feasibility for utility shaping. Thus, the decentralized scheme guarantees social welfare—exact in the limit, and quantitatively bounded at finite horizons.
\end{remark}


\begin{thebibliography}{}

\bibitem[Albadi \& El-Saadany, 2008]{AlbadiElSaadany2008}
Albadi, M.~H. \& El-Saadany, E.~F. (2008).
\newblock A summary of demand response in electricity markets.
\newblock {\em Electric Power Systems Research}, 78(11), 1989--1996.
\newblock \url{https://doi.org/10.1016/j.epsr.2008.04.002}

\bibitem[Arrow et~al., 1959]{ArrowBlockHurwicz1959}
Arrow, K.~J., Block, H.~D., \& Hurwicz, L. (1959).
\newblock On the stability of the competitive equilibrium, ii.
\newblock {\em Econometrica}, 27(1), 82--109.
\newblock \url{https://doi.org/10.2307/1907771}

\bibitem[Arrow et~al., 1958]{ArrowHurwiczUzawa1958}
Arrow, K.~J., Hurwicz, L., \& Uzawa, H. (1958).
\newblock Studies in linear and nonlinear programming.
\newblock {\em Studies in Linear and Nonlinear Programming}, 129--153. Stanford
  University Press.

\bibitem[Ba{\c s}ar \& Olsder, 1999]{BasarOlsder1999}
Ba{\c s}ar, T. \& Olsder, G.~J. (1999).
\newblock {\em Dynamic Noncooperative Game Theory} (2nd ed.).
\newblock SIAM.

\bibitem[Beck \& Teboulle, 2009]{BeckTeboulle2009}
Beck, A. \& Teboulle, M. (2009).
\newblock A fast iterative shrinkage-thresholding algorithm for linear inverse
  problems.
\newblock {\em SIAM Journal on Imaging Sciences}, 2(1), 183--202.
\newblock \url{https://doi.org/10.1137/080716542}

\bibitem[Beckmann et~al., 1956]{Beckmann1956}
Beckmann, M.~J., McGuire, C.~B., \& Winsten, C.~B. (1956).
\newblock {\em Studies in the Economics of Transportation}.
\newblock Yale University Press.

\bibitem[Bertsekas, 1999]{Bertsekas1999}
Bertsekas, D.~P. (1999).
\newblock {\em Nonlinear Programming} (2 ed.).
\newblock Athena Scientific.

\bibitem[Boyd et~al., 2011]{BoydADMM2011}
Boyd, S., Parikh, N., Chu, E., Peleato, B., \& Eckstein, J. (2011).
\newblock Distributed optimization and statistical learning via the alternating
  direction method of multipliers.
\newblock {\em Foundations and Trends in Machine Learning}, 3(1), 1--122.
\newblock \url{https://doi.org/10.1561/2200000016}

\bibitem[Cachon, 2003]{Cachon2003}
Cachon, G.~P. (2003).
\newblock Supply chain coordination with contracts.
\newblock {\em Supply Chain Management: Design, Coordination and Operation},
  volume~11 of {\em Handbooks in Operations Research and Management Science},
  227--339. Elsevier.
\newblock \url{https://doi.org/10.1016/S0927-0507(03)11006-7}

\bibitem[Chiang et~al., 2007]{ChiangProcIEEE2007}
Chiang, M., Low, S.~H., Calderbank, A.~R., \& Doyle, J.~C. (2007).
\newblock Layering as optimization decomposition: A mathematical theory of
  network architectures.
\newblock {\em Proceedings of the IEEE}, 95(1), 255--312.
\newblock \url{https://doi.org/10.1109/JPROC.2006.887322}

\bibitem[Combettes \& Pesquet, 2011]{CombettesPesquet2011}
Combettes, P.~L. \& Pesquet, J.-C. (2011).
\newblock Proximal splitting methods in signal processing.
\newblock {\em Fixed-Point Algorithms for Inverse Problems in Science and
  Engineering}, 185--212.

\bibitem[Dantzig, 1957]{Dantzig1957}
Dantzig, G.~B. (1957).
\newblock Discrete-variable extremum problems.
\newblock {\em Operations Research}, 5(2), 266--277.
\newblock \url{https://doi.org/10.1287/opre.5.2.266}

\bibitem[Dong et~al., 2016]{DongSmithHanlen2016}
Dong, J., Smith, D.~B., \& Hanlen, L.~W. (2016).
\newblock Socially optimal coexistence of wireless body area networks enabled
  by a non-cooperative game.
\newblock {\em ACM Transactions on Sensor Networks}, 12(4), 1--18.

\bibitem[Facchinei \& Pang, 2003]{FacchineiPang2003}
Facchinei, F. \& Pang, J.-S. (2003).
\newblock {\em Finite-Dimensional Variational Inequalities and Complementarity
  Problems}.
\newblock Springer.

\bibitem[Hofbauer \& Sandholm, 2009]{HofbauerSandholm2009}
Hofbauer, J. \& Sandholm, W.~H. (2009).
\newblock Stable games and their dynamics.
\newblock {\em Journal of Economic Theory}, 144(4), 1665--1693.
\newblock \url{https://doi.org/10.1016/j.jet.2008.11.004}

\bibitem[{IEEE}, 2017]{IEEE802158-2017}
{IEEE} (2017).
\newblock {IEEE} std 802.15.8-2017: Wireless medium access control (mac) and
  physical layer (phy) specifications for peer aware communications (pac).
\newblock Standard {IEEE} Std 802.15.8-2017, {IEEE} Standards Association, New
  York, NY, USA.

\bibitem[Johari \& Tsitsiklis, 2004]{JohariTsitsiklis2004}
Johari, R. \& Tsitsiklis, J.~N. (2004).
\newblock Efficiency loss in a network resource allocation game.
\newblock {\em Mathematics of Operations Research}, 29(3), 407--435.

\bibitem[Juditsky \& Nemirovski, 2011]{JuditskyNemirovski2011}
Juditsky, A. \& Nemirovski, A. (2011).
\newblock First-order methods for stochastic optimization.
\newblock {\em SIAM Journal on Optimization}, 21(1), 1--30.

\bibitem[Kelly et~al., 1998]{Kelly1998}
Kelly, F.~P., Maulloo, A.~K., \& Tan, D. K.~H. (1998).
\newblock Rate control for communication networks: shadow prices, proportional
  fairness and stability.
\newblock {\em Journal of the Operational Research Society}, 49(3), 237--252.

\bibitem[Low \& Lapsley, 1999]{LowLapsley1999}
Low, S.~H. \& Lapsley, D.~E. (1999).
\newblock Optimization flow control, i: basic algorithm and convergence.
\newblock {\em IEEE/ACM Transactions on Networking}, 7(6), 861--874.

\bibitem[McMahan et~al., 2017]{McMahan2017}
McMahan, H.~B., Moore, E., Ramage, D., Hampson, S., \& y~Arcas, B.~A. (2017).
\newblock Communication-efficient learning of deep networks from decentralized
  data.
\newblock {\em Proceedings of the 20th International Conference on Artificial
  Intelligence and Statistics (AISTATS)}, volume~54 of {\em Proceedings of
  Machine Learning Research}, 1273--1282.
\newblock \url{https://proceedings.mlr.press/v54/mcmahan17a.html}

\bibitem[Monderer \& Shapley, 1996]{MondererShapley1996}
Monderer, D. \& Shapley, L.~S. (1996).
\newblock Potential games.
\newblock {\em Games and Economic Behavior}, 14(1), 124--143.

\bibitem[Nedi{\'c} \& Ozdaglar, 2010]{NedicOzdaglar2010}
Nedi{\'c}, A. \& Ozdaglar, A. (2010).
\newblock Distributed subgradient methods for multi-agent optimization.
\newblock {\em SIAM Journal on Control and Optimization}, 48(1), 48--70.
\newblock \url{https://doi.org/10.1137/070706741}

\bibitem[Nesterov, 2004]{Nesterov2004}
Nesterov, Y. (2004).
\newblock {\em Introductory Lectures on Convex Optimization: A Basic Course}.
\newblock Springer.
\newblock \url{https://doi.org/10.1007/978-1-4419-8853-9}

\bibitem[Palensky \& Dietrich, 2011]{PalenskyDietrich2011}
Palensky, P. \& Dietrich, D. (2011).
\newblock Demand side management: Demand response, intelligent energy systems,
  and smart loads.
\newblock {\em IEEE Transactions on Industrial Informatics}, 7(3), 381--388.
\newblock \url{https://doi.org/10.1109/TII.2011.2158841}

\bibitem[Rockafellar \& Wets, 1998]{RockafellarWets1998}
Rockafellar, R.~T. \& Wets, R. J.-B. (1998).
\newblock {\em Variational Analysis}.
\newblock Springer.

\bibitem[Scutari et~al., 2014]{Scutari2014}
Scutari, G., Facchinei, F., Pang, J.-S., \& Palomar, D.~P. (2014).
\newblock Real and complex monotone communication games.
\newblock {\em IEEE Transactions on Information Theory}, 60(7), 4197--4231.

\bibitem[Shapiro et~al., 2014]{ShapiroDentchevaRuszczynski2014}
Shapiro, A., Dentcheva, D., \& Ruszczynski, A. (2014).
\newblock {\em Lectures on Stochastic Programming} (2 ed.).
\newblock SIAM.

\bibitem[Simchi-Levi et~al., 2007]{SimchiLevi2007}
Simchi-Levi, D., Kaminsky, P., \& Simchi-Levi, E. (2007).
\newblock {\em Designing and Managing the Supply Chain} (3rd ed.).
\newblock McGraw-Hill/Irwin.

\bibitem[Tarski, 1955]{Tarski1955}
Tarski, A. (1955).
\newblock A lattice-theoretical fixpoint theorem and its applications.
\newblock {\em Pacific Journal of Mathematics}, 5(2), 285--309.

\bibitem[Topkis, 1979]{Topkis1979}
Topkis, D.~M. (1979).
\newblock Equilibrium points in nonzero-sum $n$-person submodular games.
\newblock {\em SIAM Journal on Control and Optimization}, 17(6), 773--787.
\newblock \url{https://doi.org/10.1137/0317053}

\bibitem[Topkis, 1998]{Topkis1998}
Topkis, D.~M. (1998).
\newblock Supermodularity and complementarity.
\newblock {\em Princeton University Press}.

\bibitem[Walras, 1954]{Walras1954}
Walras, L. (1954).
\newblock {\em Elements of Pure Economics} (translated by w. jaff{\'e} ed.).
\newblock Allen \& Unwin.

\bibitem[Yang \& Smith, 2017]{YangSmith2017ICC}
Yang, Y. \& Smith, D.~B. (2017).
\newblock Wireless body area networks: Energy-efficient, provably
  socially-efficient, transmit power control.
\newblock {\em IEEE International Conference on Communications (ICC)}, 1--7.

\bibitem[Yates, 1995]{Yates1995}
Yates, R.~D. (1995).
\newblock A framework for uplink power control in cellular radio systems.
\newblock {\em IEEE Journal on Selected Areas in Communications}, 13(7),
  1341--1347.

\end{thebibliography}

\appendix

\section{Proofs}
\label{app:proofs}

\subsection{Proof of Proposition~\ref{prop:exp-geometry}}
\label{prop:exp-geometry-proof}
\begin{proof}
Let $y(x)=\exp(-a x^{-\beta})$ with $a=\kappa^\beta>0$ and $\beta>0$.
(i) Since $ax^{-\beta}>0$, $0<y(x)<1$. Differentiating,
\[
y'(x)=\exp(-a x^{-\beta})\cdot a\beta x^{-(\beta+1)}=y(x)\,a\beta x^{-(\beta+1)}>0,
\]
so $y$ is strictly increasing on $(0,\infty)$.\\
(ii) Differentiating again,
\[
y''(x)=\big(y a\beta x^{-(\beta+1)}\big)'=y(a\beta)^2 x^{-2\beta-2}-y\,a\beta(\beta+1)x^{-\beta-2}
= y\,a\beta\,x^{-(\beta+2)}\Big(a\beta x^{-\beta}-(\beta+1)\Big).
\]
Because $y>0$, $a\beta>0$, $x^{-(\beta+2)}>0$, the sign of $y''$ is that of
$a\beta x^{-\beta}-(\beta+1)$, which is strictly decreasing in $x$ and crosses
zero exactly once at $a\beta x^{-\beta}=\beta+1$. Thus the unique inflection is at
\[
x^\star=\Big(\tfrac{a\beta}{\beta+1}\Big)^{1/\beta}
=\kappa\Big(\tfrac{\beta}{\beta+1}\Big)^{1/\beta}.
\]
Evaluating $y$ at $x^\star$ gives $y(x^\star)=\exp\big(-a(x^\star)^{-\beta}\big)=\exp\big(-(\beta+1)/\beta\big)
=e^{-1-1/\beta}\in(e^{-2},e^{-1})$.\\
(iii) On any compact $I=[m,M]\subset(0,\infty)$, $x^{-(\beta+1)}$ and $x^{-(\beta+2)}$ are bounded by
$m^{-(\beta+1)}$ and $m^{-(\beta+2)}$, respectively. The formulas above then imply uniform bounds on
$|y'|$ and $|y''|$ over $I$. Hence $y$ is Lipschitz and has bounded curvature on $I$.
\end{proof}

\subsection{Proof of Proposition~\ref{prop:potential-kkt}}
\label{prop:potential-kkt-proof}
\begin{proof}
Define $\Phi(p)=\sum_i v_i(\mathrm{Rel}_i(p))-\lambda\sum_i c_i(p_i)$. For any $i$ and any $p_{-i}$,
\[
\Phi(p_i,p_{-i})-\Phi(q_i,p_{-i})=u_i(p_i,p_{-i})-u_i(q_i,p_{-i}),
\]
so $\nabla_{p_i}\Phi=\nabla_{p_i}u_i$ whenever gradients exist; thus the game is an exact potential game
with potential $\Phi$. Any (pure) NE $p^\star\in X$ satisfies $\nabla_{p_i}u_i(p^\star)=0$ along feasible
directions, which coincide with the KKT stationarity of $\max_{p\in X}\Phi(p)$, so $p^\star$ maximizes $W\equiv\Phi$.
Conversely, any maximizer of $\Phi$ satisfies $\nabla_{p_i}\Phi=0$, i.e., no player can improve—hence it is a NE.
For convex constraints $g(p)\le 0$, define $U_i$ so that $\nabla_{p_i}U_i=\nabla_{p_i}(W+\mu^\top g)$ at $(p^\star,\mu^\star)$;
then selfish FOCs reproduce the centralized KKT stationarity.
\end{proof}

\subsection{Proof of Theorem~\ref{thm:svi-unique}}
\label{thm:svi-unique-proof}
\begin{proof}
Let $\Phi_z(p)=\mathbb{E}[\Phi(p;\theta)\mid z]$. By assumption (A3), $\Phi_z$ is strongly concave with modulus $\mu>0$
on $X$, hence $-\nabla\Phi_z$ is $\mu$-strongly monotone:
\[
\big(-\nabla\Phi_z(p)+\nabla\Phi_z(q)\big)^\top(p-q)\ge \mu\|p-q\|^2,\quad \forall p,q\in X.
\]
But $F_z(p)=-\nabla\Phi_z(p)$ by exact potential, so $F_z$ is $\mu$-strongly monotone.
Continuity (from (A4)) and convex compact $X$ imply the variational inequality $\mathrm{VI}(X,F_z)$ has a unique solution
(see \cite{FacchineiPang2003}). First-order optimality for each player coincides with the VI condition, hence the unique
solution is the (unique) Bayesian Nash equilibrium.
\end{proof}

\subsection{Proof of Lemma~\ref{lem:lipschitz}}
\label{lem:lipschitz-proof}
\begin{proof}
Write $F(p;\theta)=(-\nabla_{p_i}u_i(p;\theta))_i$. Each $u_i$ is a composition of smooth maps:
$v_i\circ \mathrm{Rel}_i$ and $c_i$, where $\mathrm{Rel}_i$ composes an SINR/“effective signal” map $s_i(p)$ with $y$.
On compact $X$, $s_i(p)$ ranges over a compact subset of $(0,\infty)$, so by Prop.~\ref{prop:exp-geometry}
both $|y'|$ and $|y''|$ are uniformly bounded on that set. The chain rule bounds $\|\nabla F(p;\theta)\|$ by constants
depending on those bounds and on derivatives of $s_i$ and $c_i$, which are bounded on $X$. Hence
$\|F(p;\theta)-F(q;\theta)\|\le L(\theta)\|p-q\|$ for all $p,q\in X$. Conditional expectation preserves Lipschitzness,
so $\|F_z(p)-F_z(q)\|\le L\|p-q\|$ for some finite $L$.
\end{proof}

\subsection{Proof of Theorem~\ref{thm:tracking}}
\label{thm:tracking-proof}
\begin{proof}
\textbf{Projected gradient with damping.}
Let $G_\eta(p)=p-\eta F_z(p)$. With $\mu$-strong monotonicity and $L$-Lipschitzness,
\[
\|G_\eta(p)-G_\eta(q)\|^2=\|p-q\|^2-2\eta\langle F_z(p)-F_z(q),p-q\rangle+\eta^2\|F_z(p)-F_z(q)\|^2
\le (1-2\eta\mu+\eta^2L^2)\|p-q\|^2.
\]
Thus $G_\eta$ is a contraction with modulus $q=\sqrt{1-2\eta\mu+\eta^2L^2}<1$ for $0<\eta<2\mu/L^2$. Projection $\Pi_X$
is nonexpansive, so $T_0=\Pi_X\circ G_\eta$ is a contraction with the same modulus. Damping gives
$T=(1-\rho)I+\rho T_0$ with modulus $\alpha=(1-\rho)+\rho q<1$.

At time $t$, we use a noisy oracle $\tilde F_z(p_{t-1})=F_z(p_{t-1})+\xi_{t-1}$, yielding
\[
p_t=(1-\rho)p_{t-1}+\rho\,\Pi_X\big(p_{t-1}-\eta(F_z(p_{t-1})+\xi_{t-1})\big).
\]
Let $p_t^\star$ denote the (time-varying) SVI solution. Add and subtract $T(p_{t-1}^\star)$ and use nonexpansiveness to get
\[
\|p_t-p_t^\star\|\le \alpha\|p_{t-1}-p_{t-1}^\star\| + \rho\eta\|\xi_{t-1}\| + \|p_t^\star-p_{t-1}^\star\|.
\]
Taking expectations conditional on $p_{t-1}$ and using $\mathbb{E}[\xi_{t-1}]=0$, $\mathbb{E}\|\xi_{t-1}\|\le\sigma$, we obtain
\[
\mathbb{E}\|p_t-p_t^\star\|\le \alpha\,\mathbb{E}\|p_{t-1}-p_{t-1}^\star\| + \rho\eta\,\sigma + \Delta_t,
\]
with $\Delta_t=\|p_t^\star-p_{t-1}^\star\|$. Set $\beta=\max\{1,\rho\eta\}$ to match the statement. In steady state with bounded
drift and noise, unwind the recursion to get $O((\mathrm{drift}+\sigma)/(1-\alpha))$.

\textbf{Best response with hysteresis.}
Strong monotonicity implies diagonal strict concavity / single-crossing of best responses. The induced BR operator with hysteresis
band $h>0$ is a strict pseudo-contraction in a weighted norm; a standard perturbation argument yields the same one-step inequality
(up to constants), hence the same recursion and bound.
\end{proof}

\subsection{Proof of Corollary~\ref{cor:violations}}
\label{cor:violations-proof}
\begin{proof}
The dual update $z_{t+1}=[z_t+\eta_z(\sum_i g_i(p_{i,t})-C)]_+$ is a projected ascent on the dual variable. At any fixed point,
$z^\star=[z^\star+\eta_z(\sum_i g_i(p^\star)-C)]_+$, which holds iff $\sum_i g_i(p^\star)\le C$ and
$z^\star(\sum_i g_i(p^\star)-C)=0$ (complementary slackness). With sufficiently small $\eta_z$ the coupled primal–dual map
remains a contraction; thus the iterates converge to such a fixed point, and steady-state violations vanish when the primal is feasible.
\end{proof}

\subsection{Proof of Lemma~\ref{lem:steps}}
\label{lem:steps-proof}
\begin{proof}
As in Theorem~\ref{thm:tracking}, for $0<\eta<2\mu/L^2$ we have
$\|G_\eta(p)-G_\eta(q)\|\le q\|p-q\|$ with $q=\sqrt{1-2\eta\mu+\eta^2L^2}<1$.
Projection preserves the contraction modulus; damping forms
$\alpha=(1-\rho)+\rho q<1$. Hence the stated stepsize region.
\end{proof}

\end{document}